\def\mb{\mathbf}
\def\mc{\mathcal}
\newtheorem{thm}{Theorem}
\newtheorem{defn}{Definition}
\newtheorem{rem}{Remark}
\newtheorem{exm}{Example}
\begin{document}
	
	\title{On the Redundant Distributed Observability of Mixed Traffic Transportation Systems
	}
	\author{Mohammadreza Doostmohammadian \and Usman A. Khan \and Nader Meskin
	}
	
	\institute{ M.  Doostmohammadian \at
		Department of Mechatronics, Faculty of Mechanical Engineering, Semnan University, Semnan, Iran and Center for International Scientific Studies and Collaborations, Tehran, Iran. \\
		\email{doost@semnan.ac.ir}		
		\and		
		 U. A. Khan \at
		  Computer Science Department, Boston College, Boston, USA. \\
		 \email{usman.khan@bc.edu}
		 \and		
		 N. Meskin \at
		 Electrical Engineering Department, Qatar University, Doha, Qatar. \\
		 \email{nader.meskin@qu.edu.qa}
		 \and
		 Corresponding Author: M.  Doostmohammadian,  \email{doost@semnan.ac.ir}
		}
	
	\date{Received: date / Accepted: date}

	\maketitle
	
	\begin{abstract}
 In this paper, the problem of distributed state estimation of human-driven vehicles (HDVs) by connected autonomous vehicles (CAVs) is investigated in mixed traffic transportation systems. Toward this, a distributed observable state-space model is derived, which  paves the way for estimation and observability analysis of HDVs in mixed traffic scenarios. In this direction, first, we obtain the condition on the network topology to satisfy the distributed observability, i.e., the condition such that each HDV state is observable to every CAV via information-exchange over the network. It is shown that strong connectivity of the network, along with the proper design of the observer gain, is sufficient for this. A distributed observer is then designed by locally sharing estimates/observations of each CAV with its neighborhood. Second, in case there exist faulty sensors or unreliable observation data, we derive the condition for \textit{redundant} distributed observability as a $q$-node/link-connected network design. This redundancy is achieved by extra information-sharing over the network and implies that a certain number of faulty sensors and unreliable links can be isolated/removed without losing the observability. Simulation results are provided to illustrate the effectiveness of the proposed approach. 

		\keywords{Mixed traffic transportation systems \and redundant observability \and consensus \and distributed observer \and graph theory}
	\end{abstract}

	\section{Introduction} \label{sec_intro}
 Intelligent Transportation Systems (ITS) represent a critical advancement in modern transportation, using interconnected vehicles and infrastructure to improve traffic safety and efficiency \cite{bazzan2022introduction}. The emergence of connected autonomous vehicles (CAVs) offers a promising solution for future smart cities as compared to human-driven vehicles (HDVs), CAVs are expected to have shorter reaction times and more precise control.
Although CAVs offer a promising opportunity to reduce road accidents and enhance traffic efficiency in the future, the widespread adoption of fully autonomous vehicles within the traffic network has yet to be realized. As a result, the coexistence of different vehicle types within a shared transportation system, known as mixed traffic, is expected in the coming decades. The presence of  HDVs introduces a fundamental challenge: ensuring safety and resilience despite the unpredictability of human behavior. One of the main challenges in mixed traffic scenarios is ensuring that each CAV  can accurately estimate
the state of the nearby HDVs using the available shared information in the network. To tackle this problem, this paper proposes a distributed observer design and observability analysis in mixed traffic ITS.

Traditional centralized approaches \cite{biroon2021false,ruggaber2021novel,rostami2020state,jiang2021observer} to observer design often fall short due to their vulnerability to single point of failure and scalability issues, and hence, a distributed observer framework is more appealing. In this paper,  a distributed observable state-space model framed for mixed traffic ITS is introduced that extends the observer design into a distributed framework, accommodating the dynamic states of vehicles, their sensing capabilities, and the information-sharing network. Formulating a concatenated model allows comprehensive observability and analysis across large-scale networks of CAVs. The key challenges include the design of the network topology to satisfy the distributed observability conditions and the resilience of state estimation in the presence of faulty sensors \cite{hajshirmohamadi2019distributed}, unreliable observation data \cite{tnse_attack}, or lossy communication networks \cite{icrom}.

\emph{Related Literature:} A review of sensor/infrastructure for connected vehicles in terms of sensor fusion and sensor placement problems is given in \cite{fabris2025efficient,yeong2021sensor}.
The work \cite{zhang2023intermediate} presents a general networked framework of dynamical systems for robust observer-based design. In \cite{chen2023distributed,WANG20174039}, a resilient distributed observer design based on zonotopic set membership is proposed. Such models can be used for application in ITS monitoring, however, with the assumption of local observability at every node. Adding sensor redundancy to monitor traffic flow through wireless sensor networks is considered in \cite{gagliardi2024trustworthy}.  On the other hand, the impact of the data-sharing network topology on the resilience of distributed estimation in vehicle platoons \cite{pirani2022impact}, the mixed traffic dynamics of connected vehicles \cite{ruan2022impacts}, and resilience against adversarial cyberattacks \cite{wei2023hierarchical} or false data injection attacks \cite{zhu2023distributed} are studied in the ITS literature. The effect of communication time delays on string stability as well as the minimum time headway \cite{abolfazli2023minimum}, and multi-vehicle formation tracking under actuator faults \cite{wu2023finite} are also considered. The existing literature also discusses: cooperative mesh stability \cite{qiu2022cooperative}, game-theoretic framework for improving the resilient formation of vehicle platoons \cite{pirani2020graph}, distributed formation-based tracking \cite{tase}, and stability analysis of mixed traffic flow \cite{yu2023assessment,luo2024stabilizing,zakerimanesh2024stability}. The existing works on distributed estimation include consensus filters \cite{he2021secure,qian2022consensus,battilotti2021stability} with an inner consensus loop to relax the observability assumption and network connectivity with many iterations of consensus/communication between every two steps of system dynamics or fault-tolerant estimator design \cite{ghods2025resilient,panigrahi2016fault} resilient to faults/anomalies in the sensor measurements. These filters require much faster processing and communication than sampling system dynamics, which might be practically burdensome
	in real-world ITS setup. 
What is missing in the existing literature is a general framework to model mixed traffic ITS observability by incorporating the vehicle's dynamics with the interconnection network. Such a unified framework allows for handling observability requirements via network connectivity properties and resilient design.  

\emph{Contributions:}
Through this work, we establish the foundation  for robust, scalable, and fault-tolerant observability  in mixed traffic ITS, paving the way for improved CAVs autonomy and enhanced traffic management. Our key contributions are as follows: (i) We introduce a distributed observable state-space model that integrates general HDV dynamics, CAVs' sensing capabilities, and data-exchange networks into a unified framework. Unlike traditional centralized models, our approach enables decentralized observation (and control), enhancing scalability and robustness in large-scale ITS. The proposed concatenated model effectively captures the complexities of vehicle dynamics and communication networks, providing a system-of-systems or network-of-networks representation, as described in \cite{chapman2016multiple}. (ii) We derive the specific connectivity condition on the network topology to ensure distributed observability. Our work demonstrates that strong network connectivity, coupled with a well-designed observer gain, is sufficient for each HDV state to be observable by CAVs. This improves upon existing methods that often rely on local observability conditions in the neighborhood of each CAV \cite{biroon2021false,ruggaber2021novel,rostami2020state,jiang2021observer}. Compared to the other distributed approaches \cite{modalavalasa2021review,vtc,guo2017distributed,abdelmawgoud2020distributed,yang2023state}, in this work, the network strong-connectivity is a more relaxed condition and allows for redundant network design as our next contribution. (iii) We address the issue of faulty sensors and unreliable data by introducing the concept of redundant distributed observability. Our work defines the conditions under which a $q$-node/link-connected network can tolerate and isolate a specific number of faulty sensors without losing distributed observability. This redundancy aspect is often overlooked in the existing methods, which typically do not account for sensor faults in a distributed filtering setup. (iv) Finally, the paper proposes a distributed observer design in which CAVs share estimates and observations locally with their neighbors. This localized sharing reduces communication overhead and increases system resilience compared to centralized and other existing distributed techniques that require more data-sharing and extensive communication infrastructure.

\emph{Organization of the Paper:}
Section~\ref{sec_prob} formulates the main unified distributed framework for mixed traffic ITS. Section~\ref{sec_main} presents our main results on distributed observability conditions, redundant design, and the proposed distributed observer. Section~\ref{sec_examp} provides some illustrative simulations, and Section~\ref{sec_con} concludes the paper.

\section{Problem Formulation} \label{sec_prob}
In this section, we first formulate \textit{distributed observability} of the  mixed traffic ITS from the first principles. In this formulation, we consider a networked control system (also referred to as the system-of-systems) that includes a group of CAVs to track the state of HDVs in a mixed traffic scenario. A global dynamics is derived for the network  based on the constituent general vehicle dynamics and the data-sharing network. 

Consider the following linear dynamical system as a general model describing the dynamics of a group of $N$ HDVs as part of a mixed traffic network:
\begin{eqnarray}\label{eq_sys1} 
	\mb{x}_{k+1} = A\mb{x}_k + \nu_k,
\end{eqnarray}
where~$\mb{x}_k=[\mb{x}_{1,k},\ldots,\mb{x}_{N,k}]^\top \in\mathbb{R}^{N m}$ and $\mb{x}_{i,k} \in\mathbb{R}^{m}$ is the state of the $i$th HDV at time $k$,~$A=[a_{ij}] \in \mathbb{R}^{Nm\times Nm}$ denotes the overall state-space system matrix, and~$\nu_k$ represents the noise/disturbance input\footnote{Some literature consider an input matrix $B$ associated with this random input as $B\nu_k$. For example, later in this paper, the nearly-constant-acceleration model in Example~\ref{exm_nca} considers an input matrix and random unknown input variable that gives the system model as $\mb{x}_{k+1} = A\mb{x}_k + B\nu_k$}. Each diagonal block of the system matrix $A$, denoted by $\widetilde{A}_i$ with $i\in \{1,\ldots,N\}$, represents the dynamics associated with the $i$th HDV\footnote{For homogeneous ITS the dynamics of all HDVs are assumed to be the same and $\widetilde{A}_i=\widetilde{A}_j$ for all $i,j \in \{1,\ldots,N\}$}. 
The embedded sensors at $n$ CAVs take  observations of the state of  $N$ HDVs, represented as
\begin{eqnarray} \label{eq_H_i}
	\mb{y}_{i,k} = C_i\mb{x}_k + {\mu}_{i,k},~~~i=1,\dots,n
\end{eqnarray}
where $\mb{y}_{i,k}\in\mathbb{R}^{l_i}, ~C_i\in\mathbb{R}^{l_i\times Nm}$, and~${\mu}_{i,k}$  denote the measurement vector, local output matrix, and noise at the $i$th CAV, respectively. 
The global observation vector of the entire CAVs is then defined as
\begin{eqnarray} \label{eq_H}
	\mb{y}_k = C\mb{x}_k + \mu_k,
\end{eqnarray}
with~$\mb{y}_k \in\mathbb{R}^{L},~C=[C_1^\top,C_2^\top,\ldots,C^\top_n]^\top$ as the \textit{global} measurement vector and output matrix, respectively,
with~$L=l_1+\ldots+l_n$ and~$\mu_k$ as the global noise. 
The problem of estimating the state of the HDVs can be fundamentally considered in two main scenarios:
\begin{itemize}
	\item \textit{Centralized setup:} In this setup, the sensor measurements at different CAVs are transmitted to a central coordinator for state estimation and filtering purposes. Given the noisy sensor measurements, using a centralized observer scheme, the central coordinator is able to estimate the entire system states at all HDVs if and only if the pair $(A,C)$ (or $(A,C_i)$) is observable \cite{bay}. See \cite{biroon2021false,ruggaber2021novel,rostami2020state,jiang2021observer} as examples of centralized observer design.
	\item \textit{Distributed setup:} The processing units are embedded at each CAV such that they can interact over a communication network with a given graph topology. Each vehicle then estimates the  state, ${\mb{x}}_{k}$, given the local sensor measurements and/or
	\textit{predictions} (or a-priori estimates) from the neighbouring CAVs. Given that the entire system is not observable to any single CAV, we formally derive the condition for \textit{distributed observability}. This implies that under such conditions each CAV is able to estimate the state of all HDVs over time only by local data-exchange in its neighbourhood. See \cite{modalavalasa2021review,vtc,abdelmawgoud2020distributed,guo2017distributed,he2020distributed,yang2023state} (and references therein) for examples of distributed observer algorithms for general applications.
\end{itemize}
{To establish the foundations for the distributed ITS observability problem, first consider the graph topology $\mc{G}_W=(\mc{V}_W,\mc{E}_W)$ as the interaction network among the CAVs. A link $(j,i) \in \mc{E}_W$ implies that CAV $j$ sends information to CAV $i$. As we see later in the paper, we need the information-sharing (directed or undirected) network of CAVs to be strongly-connected. Let the matrix $W=[w_{ij}] \in \mathbb{R}^{n\times n}$ be the adjacency weight matrix of $\mc{G}_W$ which needs to be row-stochastic\footnote{Matrix $W$ is called row-stochastic if we have $\sum_{j=1}^n w_{ij}=1$ for every $i \in \{1,\dots,n\}$. } and $\mc{N}_i$ denotes the neighbours of vehicle~$i$ over this network. A straightforward way to assign stochastic weights to the links associated with each CAV is to consider $w_{ij}=\frac{1}{\mc{N}_i}$. However, in a more general case, the weights at different links may differ while satisfying the row-stochastic property. This row-stochastic property simply implies that each CAV $i \in \mc{V}_W$ averages the information sent from the neighbouring CAVs in $\mc{N}_i \in \mc{V}_W$, which leads to an \textit{agreement} or \textit{consensus} state over time.  }
By exchange of information over $\mc{G}_W$, the $i$th CAV  now estimates the global state of HDVs, $\mb{x}_k$, with its own possible observation, $\mb{y}_{i,k}$, and information received from its neighbours $\mb{y}_{j,k}$, $j\in\mc{N}_i$. At the $i$th CAV,  thus, the problem is to estimate the state of Eq.~\eqref{eq_sys1} associated with the HDVs from the following outputs:
\begin{eqnarray} \label{eq_sys3}
	\mb{y}_{j,k} = C_j\mb{x}_k + \mu_{j,k},\qquad j\in\mc{N}_i,i \in \mb{I}_1^n,
\end{eqnarray}
with $\mb{I}_1^n$ denoting the set $\{1,\dots,n\}$ or via its equivalent redefined model:
\begin{eqnarray}\label{eq_zk}
	\mb{z}_{i,k} = \sum_{j\in\mc{N}_i}C_j^\top C_j\mb{x}_k + \sum_{j\in\mc{N}_i}C_j^\top\mu_{j,k},
\end{eqnarray}
assuming that $\mc{N}_i$ includes node $i$ itself. Eq.~\eqref{eq_zk} captures the shared measurements in the neighborhood of $i$th CAV.  Consider the case that the shared information only includes the measurements $\mb{y}_{j,k},j\in\mc{N}_i$; it immediately follows that this estimation problem can be only realized if and only if the pair $(A,\sum_{j \in \mc{N}_i} C_j^\top C_j)$ is observable \cite{sauter:09}. Thus, the decentralized observability \emph{in this case} can be defined based on observability of the pair $(I\otimes A, D_C)$ with the following block-diagonal observation matrix:
\begin{eqnarray} \label{D_H}
	D_C =
	\left(
	\begin{array}{ccc}
		\sum_{j\in\mc{N}_1}C_j^\top C_j&&\\
		&\ddots&\\
		&&\sum_{j\in\mc{N}_n}C_j^\top C_j
	\end{array}
	\right).
\end{eqnarray}
and $\otimes$ as the Kronecker product. This argument represents a trivial semi-centralized solution for observability, where the CAV $i$ needs to be densely connected in its neighbourhood \cite{sauter:09,das2016consensus} such that all its limited sensor measurements are widely shared over the network $\mc{G}_W$. However, intuitively, it is expected to have a more relaxed setup for distributed observability over $\mc{G}_W$. In the following discussions, it is shown  that distributed observability, in its most general case, \emph{does not} require that the state of HDVs be \textit{directly measured} and observable in its neighbourhood. This follows adding a step of \emph{a-priori estimate} or \emph{prediction} fusion as discussed below.

To generalize the distributed estimation problem, assume that the pair $(A,\sum_{j \in \mc{N}_i} C_j^\top C_j)$ is not necessarily observable in the $i$th CAV, $i \in \mb{I}_1^n$. In what follows, it is shown that distributed observability does not require this condition. To formulate the most general setup, consider~$\widehat{\mb{x}}^i_{k|k}\in\mathbb{R}^{N m}$ as the estimate of $\mb{x}_k\in\mathbb{R}^{N m}$ by using all the observations accessible to the $i$th CAV  (up to time $k$) over the communication network, $\mc{G}_W$. Let the \emph{global} estimate of the HDVs' states be represented by column concatenation of the local estimates as 
\begin{eqnarray}
	\widehat{\mb{x}}_{k|k}:=
	\left(
	\begin{array}{c}
		\widehat{\mb{x}}^1_{k|k}\\
		\widehat{\mb{x}}^2_{k|k}\\
		\vdots \\
		\widehat{\mb{x}}^n_{k|k}
	\end{array}
	\right)\in\mathbb{R}^{nN m}
\end{eqnarray}
Now, the above represents the estimate of the global state representing the ITS as a networked system (or system-of-systems) given as 
\begin{eqnarray}\label{eq_xx}
	\underline{\mb{x}}_{k}:=
	\left(
	\begin{array}{c}
		\mb{x}_k\\
		\mb{x}_k\\
		\vdots \\
		\mb{x}_k
	\end{array}
	\right) = \mb{1}_n \otimes \mb{x}_k.
\end{eqnarray}
By combining \eqref{eq_sys1} and ~\eqref{eq_xx},  the dynamics equation corresponds to $\underline{\mb{x}}_{k}$ is given as follows:
\begin{eqnarray}
	\underline{\mb{x}}_{k+1} &=& \mb{1}_n\otimes \mb{x}_{k+1}\nonumber \\
	&=& \mb{1}_n\otimes (A\mb{x}_{k}+\nu_k)\nonumber \\
	&=& \mb{1}_n\otimes A\mb{x}_{k}+\mb{1}_n\otimes\nu_k\nonumber \\
	&=& {(W\otimes A)}\underline{\mb{x}}_{k}+{\mb{1}_n\otimes\nu_k}\nonumber\\
	&=& Q \underline{\mb{x}}_{k} + \underline{\nu}_{k},\label{eq_sys5}
\end{eqnarray}
where $Q=W\otimes A$, $\underline{\nu}_{k}={\mb{1}_n\otimes\nu_k}$  and the last equation follows if and only if the adjacency matrix $W$ is row-stochastic following the consensus nature of the design  \cite{olfatisaberfaxmurray07}. Recall that, following the row-stochastic property, we have $W\mb{1}_n= \mb{1}_n$ and, therefore, one can write $\mb{1}_n\otimes A\mb{x}_{k} = W\mb{1}_n\otimes A\mb{x}_{k} = (W\otimes A)(\mb{1}_n\otimes \mb{x}_{k})=  Q \underline{\mb{x}}_{k} $. Thus, the ITS dynamics can be associated with $Q=W\otimes A\in \mc{Q}$ as its system matrix, and we have
\begin{eqnarray}
	\mc{Q} = \{Q~|~Q=(W\otimes A) ~ \mbox{and}~ W \mbox{ is stochastic.}\}
\end{eqnarray}
where the set $\mc{Q}$ captures the structure of the general dynamics. From the perspective of structured systems theory \cite{rein_book,ramos2022overview}, $\mc{Q}$ represents a \emph{class} of system matrices that follow a fixed, time-invariant structure, while its non-zero entries may vary over time. In general, by choosing any matrix in $\mc{Q}$, the dynamics by \eqref{eq_sys5} remains a valid representation of the concatenated ITS model. This is due to the fact that any matrix in $\mc{Q}$ can be structurally represented as the Kronecker product of the communication adjacency matrix $W$ and the vehicle dynamics $A$, where the nonzero entries may change depending on the given dynamics of HDV.

The above arguments imply that distributed estimation is equivalent to the centralized estimation of the following system:
\begin{eqnarray}
	\underline{\mb{x}}_{k+1} &=& Q \underline{\mb{x}}_{k} + \underline{\nu}_{k+1},\qquad Q\in\mc{Q}, 
\end{eqnarray}
and with output variable $\mb{z}_k$ following from the column concatenation of $\mb{z}_{i,k}$ in \eqref{eq_zk} as
\begin{eqnarray}
	\mb{z}_k := & D_C\underline{\mb{x}}_{k} + \widetilde{\mu}_k,
\end{eqnarray}
where $\widetilde{\mu}_k$ is the concatenation of $\widetilde{\mu}_{i,k} = \sum_{j\in \mc{N}_i}C_j^\top\mu_{j,k}$. Notice that both $Q$ and $D_C$ matrices depend on the communication network $\mc{G}_W$ among the CAVs.  A better illustration of this distributed problem setup is given in Fig.~\ref{fig_platoon}.
\begin{figure} 
	\centering
	\includegraphics[width=3.5in]{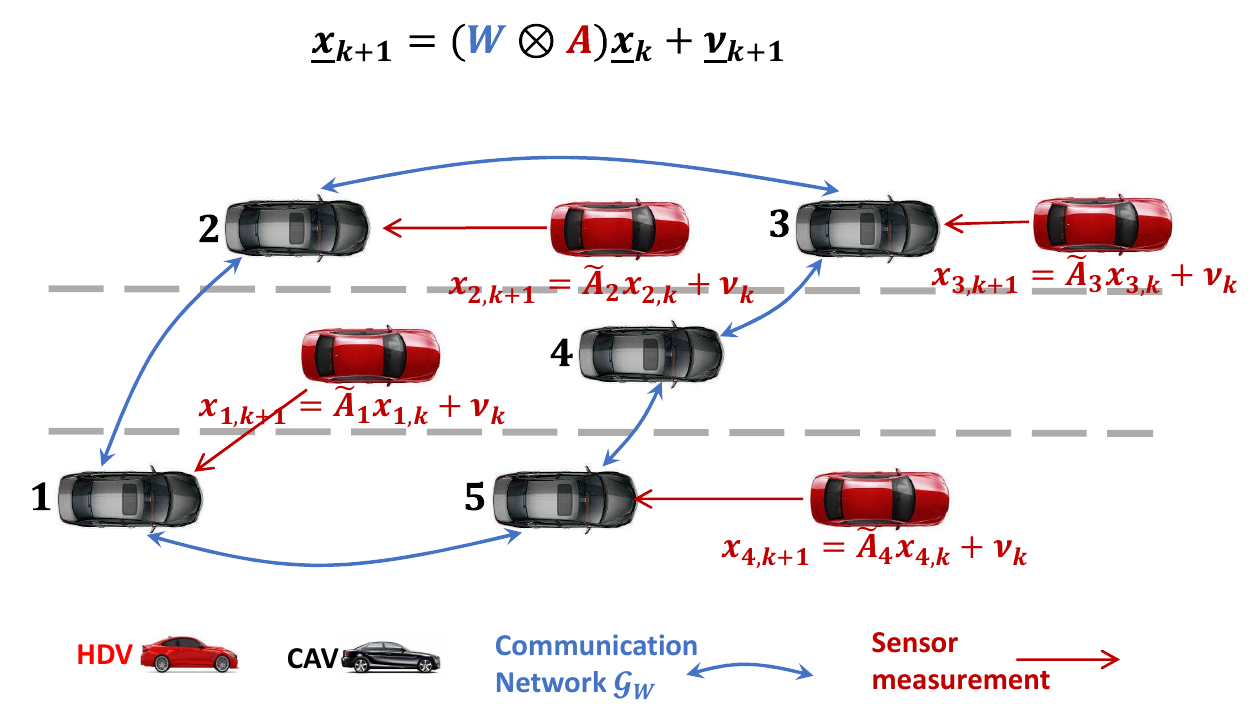}
	\caption{This figure illustrates the distributed observer design formulation in this paper. The system state to be estimated includes the state $\mb{x}_k$ of the HDVs. Each CAV aims to observe the entire state of the HDVs. The global state estimate dynamics can be represented as the Kronecker product of the vehicle's system matrix $A$ and the adjacency matrix $W$ of the communication network $\mc{G}_W$. Some nearby CAVs take measurements of the state of the HDV. Then, the observability of the global HDVs' dynamics depends on both the local system matrix $A$ and the graph topology $\mc{G}_W$ along with the local measurement matrix $C$ (or $D_C$).
	} \label{fig_platoon}
\end{figure}
All these results are summarized in the following remark. 
\begin{defn} \label{1}
	The \textit{distributed observability} of the ITS modelled by a network of $n$ CAVs tracking the state of $N$ HDVs with dynamics~\eqref{eq_sys1} and outputs~\eqref{eq_H} is defined as observability of the pair
	\begin{equation} \label{eq_dist_obsrv}
		(W \otimes A, D_C),
	\end{equation}
	with $W$ as the adjacency matrix of the communication network of CAVs, $A$ as the system matrix of HDVs, and $D_C$ representing the shared observations as defined in~\eqref{D_H}. 
\end{defn}
An example of such a distributed estimation setup for general dynamical systems is given in \cite{TNSE19,acc13}. The other concern is losing sensor data due to faults/attacks and its impact on the ITS observability. It is likely that the sensor measurements at some CAVs are faulty and not trustworthy.  Therefore, the faulty sensors need to be isolated to prevent the cascade of error over the entire transport network\footnote{In terms of faulty sensor measurements, we assume that (at least) one trustworthy sensor output from each HDV is available to the group of CAVs.}. This is done by distributed observer-based fault-detection and isolation (FDI) strategies \cite{tnse_attack,hajshirmohamadi2019distributed,zaeri2023distributed}. Isolating the faulty sensors and removing their observation data, on the other hand, may violate $(W \otimes A, D_C)$-observability. The remedy in this work is to add redundancy in the communication network of CAVs, $\mc{G}_W$, such that by sharing more data, $(W \otimes A, D_C)$-observability is recovered. Recall that this redundancy in the network improves both prediction-fusion via $W \otimes A$ term and observation-fusion via the term $D_C$. 

Given these arguments on distributed observability, the following problems are addressed in this paper:
\begin{itemize}
	\item \textbf{Problem 1:} Considering HDVs' dynamics \eqref{eq_sys1} and observations \eqref{eq_H_i}, what are the required conditions on the network graph $\mc{G}_W$ (i.e., the structure of $W$ matrix) such that  the pair $(W \otimes A, D_C)$ is observable. Given the obtained condition, how one can design distributed observers in each CAV such that it can locally (with no need for a centralized coordinator) track the entire state of the HDVs?
	\item \textbf{Problem 2:} What are the extra conditions on $\mc{G}_W$ to have \textit{redundant} distributed observability resilient to sensor/link failures? In other words, how one can add redundancy to the graph $\mc{G}_W$ such that $(W \otimes A, D_C)$-observability holds under removal/failure of certain sensor data or communication links?
\end{itemize}  

\section{Main Results} \label{sec_main}
This section provides our main results on structural distributed observability via Kronecker-based system modeling and its resilience via $q$-node/link-connectivity. Section \ref{secsub_main} gives the structural network design via Kronecker network product to satisfy distributed observability. Section~\ref{secsub_redund} adds resilience via $q$-node/link-connectivity to network design. Section~\ref{subsec_obsrv} formulates the main distributed observer and relates the error dynamics stability to the Kronecker product design.     

\subsection{Network Design for Distributed Observability} \label{secsub_main}
In this section, after discussing centralized $(A,C)$-observability,  the conditions on the topology of the communication network of CAVs, $\mc{G}_W$, are derived to satisfy $(W \otimes A, D_C)$-observability. The proposed approach follows structured systems theory \cite{rein_book,woude:03} and is based on the zero-nonzero structure of the system matrix $A$ and adjacency weight matrix $W$. In other words, the results are irrespective of the numerical values of entries in $A$ and $W$ and hold for \textit{almost all} choices of their nonzero entries \cite{woude:03}. In this direction, define the system digraph $\mc{G}_A=\{\mc{V}_A,\mc{E}_A\}$ associated with the system matrix $A$ in \eqref{eq_sys1} modelling the HDVs' dynamics. The set of nodes in $\mc{V}_A$ represents the states of HDVs and the set of links $\mc{E}_A$ represents the nonzero entries $A_{ij}\neq 0$. Given a full-rank system matrix $A$, its system graph $\mc{G}_A$ is cyclic, i.e., it contains a family of disjoint cycles in $\mc{E}_A$ spanning all the nodes $\mc{V}_A$. This generally holds for systems where all the diagonal entries are non-zero, representing self-loops at all state nodes \cite{TNSE19}. These systems are typically referred to as self-damped dynamical systems. Further, define an output of the system graph $\mc{G}_A$ as the sensor measurement/observation of a state node in $\mc{V}_A$. 
Given these notions, one can redefine the observability of system matrix $A$ and output matrix $C$ on its associated system graph $\mc{G}_A$. In other words, instead of algebraic techniques for observability assessment, such as Grammian matrix, graph theoretic techniques are used to check the system observability \cite{woude:03}. 
Recall that these graph-theoretic methods are based on zero-nonzero structure of matrices $A,C$ and therefore are called structural observability. The following gives the main theorem on the structural observability of a given system graph.

\begin{thm} \label{thm_cent}
	A cyclic system graph $\mc{G}_A$ is observable (in the structural sense) if and only if it contains a directed path (or sequence of connected nodes) from every node $i$ to an output measurement.
\end{thm}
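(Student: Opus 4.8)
The plan is to reduce the statement to the standard two-part characterization of structural observability and then to show that the cyclic hypothesis disposes of one of the two parts for free, leaving precisely the directed-path condition. By the duality between observability of $(A,C)$ and controllability of $(A^\top, C^\top)$, structured systems theory gives that the pair is structurally observable (for almost all numerical realizations of the nonzero entries) if and only if two conditions hold simultaneously: (O1) \emph{output-reachability} --- every state node in $\mc{V}_A$ admits a directed path to some output in $\mc{G}_A$ --- and (O2) a \emph{generic-rank} (no-dilation) condition, namely that the stacked matrix $[A^\top\; C^\top]^\top$ has full structural column rank $|\mc{V}_A|$, equivalently that its bipartite graph admits a matching saturating every state column. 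Condition (O1) is verbatim the directed-path statement of the theorem, so the entire argument hinges on showing that (O2) is automatic under the cyclic hypothesis and that (O1) is then both necessary and sufficient.

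For the \emph{if} direction I would exploit the cyclic structure to obtain (O2) at no cost. By assumption $\mc{G}_A$ contains a family of disjoint cycles spanning all of $\mc{V}_A$; reading each cycle as a cyclic permutation of its own vertices yields a single permutation of the state indices, all of whose associated entries of $A$ are nonzero. This permutation is exactly a perfect matching of the bipartite graph of $A$ using only the $A$-block, so $\operatorname{s-rank}(A)=|\mc{V}_A|$ and hence $\operatorname{s-rank}[A^\top\; C^\top]^\top=|\mc{V}_A|$ as well, irrespective of $C$. Thus (O2) holds no matter where the CAV sensors are placed, and structural observability collapses to the single requirement (O1).

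For the \emph{only if} direction I would show (O1) cannot be waived. Suppose some node has no directed path to any output, and let $S\subseteq\mc{V}_A$ collect all nodes from which no output is reachable. By construction there is no edge leaving $S$ toward an output-reachable node, so after reordering states to list $S$ last, the pair $(A,C)$ takes a block-triangular form in which the $S$-block of $A$ is decoupled from $C$; the coordinates indexed by $S$ then span a nontrivial $A$-invariant subspace contained in $\ker C$. A PBH/invariant-subspace argument exhibits an unobservable mode for every admissible assignment of the nonzero entries, so the pair fails to be structurally observable. Together with the previous paragraph this yields the claimed equivalence.

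The step I expect to be the main obstacle is making the necessity direction genuinely \emph{structural}: I must argue that the output-unreachable block $S$ forces rank deficiency of the observability matrix for almost all --- not merely some special --- numerical realizations. This requires fixing the edge-orientation convention implied by $\mc{E}_A=\{A_{ij}\neq 0\}$ so that ``path to an output'' correctly tracks the propagation of state information to the measurements, and then reading off the decoupled mode directly from the block-triangular partition $\{S,\mc{V}_A\setminus S\}$, which makes the unobservable subspace explicit and parameter-independent.
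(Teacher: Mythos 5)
Your proof is correct, but it takes a genuinely more self-contained route than the paper, whose entire proof is a one-line citation: it points to the graph-theoretic proof of structural controllability in Liu et al.\ and asserts that the observability statement follows by duality, with no details at all. Your argument rests on the same classical foundation that underlies that citation---the Lin/Shields--Pearson two-condition characterization: output-reachability (O1) plus full generic rank of $[A^\top\; C^\top]^\top$, i.e., no dilations---but it makes explicit the one step the paper's citation hides completely, namely the role of the cyclic hypothesis. Your observation that a spanning family of disjoint cycles in $\mc{G}_A$ is a permutation supported on nonzero entries of $A$, hence a perfect matching giving full structural rank of $A$ alone, shows (O2) holds irrespective of where the sensors are placed; this is precisely why the theorem can be stated with the path condition only, and it is the content that the word ``cyclic'' carries in the statement. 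Your necessity argument via the output-unreachable set $S$ and the induced block-triangular form is likewise sound, and in fact gives something slightly stronger than genericity requires: the coordinates indexed by $S$ span an $A$-invariant subspace inside $\ker C$ for \emph{every} realization, not merely almost all, so unobservability is parameter-independent. In short, both approaches rest on the same structured-systems theory; yours buys a complete, checkable argument that isolates exactly where cyclicity enters, while the paper's buys brevity at the cost of leaving that role implicit.
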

\begin{proof}
	Ref. \cite{Liu-nature} presents the graph-theoretic proof for (structural) controllability, which can be easily extended to the dual concept of observability.
\end{proof}

Following the above theorem, the notion of observability is closely related to the following definitions in graph $\mc{G}_A$.
Define a strongly-connected-component (SCC), denoted by $\mc{S}=\{\mc{V}_S,\mc{E}_S\}$, as a component (or subgraph) in which there exists a directed path (a set of subsequent directed links) in $\mc{E}_S$ from every node $i \in \mc{V}_S$ to every other node $j \in \mc{V}_S$. A graph is called strongly-connected (SC) if all of its nodes make one giant SCC. It is known that the system matrix associated with an SC graph is irreducible\footnote{A matrix is called irreducible if it cannot be transformed into block upper/lower-triangular form by simultaneous row/column permutations.} \cite{rein_book}. If the system graph $\mc{G}_A$ is non-SC, its associated system matrix $A$ is reducible. If $\mc{G}_A$ is non-SC, it contains one or more SCCs with no outgoing link to nodes in other SCCs; such a component is called a \textit{parent} SCC, denoted by $\mc{S}^p$, and otherwise the component is a \textit{child} SCC, denoted by $\mc{S}^c$. A non-SC system graph $\mc{G}_A$ can be decomposed into \textit{disjoint} (parent and child) SCCs with their partial order, denoted by $\prec$, determined by the depth-first-search (DFS) algorithm in polynomial-order complexity \cite{algorithm}. For example, $\mc{S}_i \prec \mc{S}_j$ means that one or more directed links (or paths) from nodes in~$\mc{S}_i$ to nodes in~$\mc{S}_j$ exist. The following theorem gives the necessary and sufficient condition for the observability of a system graph $\mc{G}_A$ based on its SCC decomposition. {These are illustrated later in Example~\ref{exm_nca} and Fig.~\ref{fig_nca}.
	
	\begin{thm} \label{thm_scc}
		Given a cyclic graph~$\mc{G}_A$, the necessary and sufficient condition for structural observability of~$\mc{G}_A$ is to have an observation/output from (at least) one state node in every parent SCC. 
	\end{thm}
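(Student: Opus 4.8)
The plan is to reduce the entire statement to Theorem~\ref{thm_cent}, which already characterizes structural observability of a cyclic graph by the existence of a directed path from every node to some output. Since $\mc{G}_A$ is assumed cyclic, this characterization applies directly, and the remaining task is the purely combinatorial claim that reachability of an output through the SCC decomposition is equivalent to every parent SCC carrying an output. I would first recall that contracting each SCC of $\mc{G}_A$ to a single super-node produces the condensation graph, which is a directed acyclic graph (DAG) whose reachability order is exactly the partial order $\prec$. Under the definitions given, the parent SCCs are precisely the sinks of this DAG, i.e.\ the components $\mc{S}^p$ with no outgoing link to any other SCC.

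For necessity I would argue by contraposition. Suppose some parent SCC $\mc{S}^p$ contains no output, and pick any node $i$ in $\mc{S}^p$. Because $\mc{S}^p$ has no outgoing links to other SCCs, every directed path issued from $i$ is confined to $\mc{S}^p$; since $\mc{S}^p$ has no output node, $i$ cannot reach any output whatsoever. By Theorem~\ref{thm_cent} this already destroys structural observability, which is the desired contrapositive of the ``only if'' direction.

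For sufficiency, assume every parent SCC holds at least one output and take an arbitrary node $i$, lying in some SCC $\mc{S}$. If $\mc{S}$ is itself a parent SCC, it contains an output node $j$, and strong connectivity of $\mc{S}$ supplies a directed path from $i$ to $j$ internally. If $\mc{S}$ is a child SCC, I would follow a maximal chain in the condensation DAG starting at $\mc{S}$: each outgoing link leads to a strictly $\prec$-greater component, and because the DAG is finite and acyclic the chain must terminate at a $\prec$-maximal element, namely a parent SCC $\mc{S}^p$. Concatenating the inter-SCC links with strongly-connected routing inside each visited component yields a directed path from $i$ to an output node of $\mc{S}^p$. Hence every node reaches an output, and Theorem~\ref{thm_cent} certifies observability.

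The routine content dominates here, so I do not anticipate a serious obstacle; the single point deserving explicit care is the termination argument in the sufficiency direction, namely that following outgoing links from any child SCC cannot cycle and must deliver a parent SCC. This is exactly acyclicity of the condensation together with finiteness of $\mc{V}_A$, and I would state it rather than leave it implicit, since the equivalence hinges on parent SCCs being the unavoidable terminal components of the reachability order $\prec$.
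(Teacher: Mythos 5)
Your proof is correct and follows essentially the same route as the paper: both directions reduce to Theorem~\ref{thm_cent}, with necessity coming from a parent SCC without output being cut off from all outputs, and sufficiency from routing every child SCC forward to an output-carrying parent SCC. The only difference is one of rigor, not of approach: where the paper asserts ``from the definition, every child SCC has an outgoing path to one or more parent SCCs,'' you actually justify it via acyclicity and finiteness of the condensation DAG, which is a welcome tightening of the same argument.
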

	\begin{proof}
		\textbf{Sufficiency}: Assume that we have an output of a state node in every parent~SCC. Therefore, all nodes in that parent SCC $\mc{S}^p_i$ are output connected. From the definition, every child SCC has an outgoing path to one or more parent SCCs. This implies that for any child SCC $\mc{S}^c_j \prec \mc{S}^p_i$ the output connectivity is also satisfied based on the definition. Therefore, the condition in Theorem~\ref{thm_cent} holds for all state nodes and structural observability follows.
		
		\textbf{Necessity}: Assume by contradiction that there is no state observation/output from one parent SCC, say~$\mc{S}^p_j$. This implies that the state nodes in~$\mc{S}^p_j$ are not output-connected. This violates the observability condition in Theorem~\ref{thm_cent}, and the proof follows.
	\end{proof}
	Note that the results of the theorem hold in case we have parent state \textit{nodes} instead of parent \textit{SCCs}. The following example illustrates this.
	
	\begin{exm} \label{exm_nca}
		Consider the vehicle system dynamics as nearly-constant-acceleration (NCA) or nearly-constant-velocity (NCV) model in \cite{bar2004estimation}. In the 2D case, the NCA dynamics for an HDV is in the form $\mb{x}_{k+1} = A\mb{x}_k + B\nu_k$, where the system matrix $A$ and the input matrix $B$ are modelled as 
		\begin{align}  \label{eq_nca}
			A = \left(
			\begin{array}{cccccc}
				1 & 0 & 0 & 0 & 0 & 0\\
				0 & 1 & 0 & 0 & 0 & 0  \\
				T & 0 & 1 & 0 & 0 & 0 \\
				0 & T & 0 & 1 & 0 & 0 \\
				\frac{T^2}{2} & 0 & T & 0 & 1 & 0 \\
				0 & \frac{T^2}{2} & 0 & T & 0 & 1 \\  
			\end{array} \right),~B = \left(
			\begin{array}{cc}
				\frac{T^2}{2} & 0 \\
				0 & \frac{T^2}{2} \\
				T & 0 \\
				0 & T \\
				1 & 0 \\
				0 & 1 \\ 
			\end{array} \right) 
		\end{align}
		with $T$ as the time constant and states 
		\begin{align}
			\mb{x} = \left(
			\begin{array}{c}
				\ddot{p}_x \\
				\ddot{p}_y \\
				\dot{p}_x \\
				\dot{p}_y \\
				p_x\\
				p_y  \\  
			\end{array} \right)
		\end{align}
		with $p_x,p_y$ as positions, $\dot{p}_x ,\dot{p}_y $ as velocities, and $\ddot{p}_x ,\ddot{p}_y $ as accelerations in $x,y$ directions. In this case, Theorem~\ref{thm_scc} implies that we need output of the parent state nodes $p_x$ and $p_y$ as position states and the other states can be inferred by observing $p_x$ and $p_y$. This is better illustrated in Fig.~\ref{fig_nca}.
		\begin{figure}
			\centering
			{\includegraphics[width=3in]{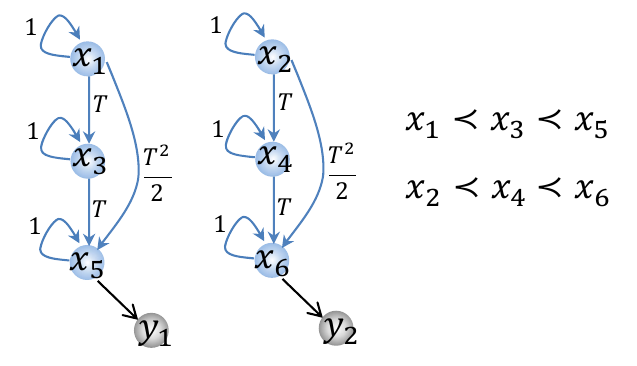}}
			\caption{ {This figure shows the system graph $\mc{G}_A$ associated with the NCA system matrix in Eq.~\eqref{eq_nca}. The partial order of SCCs (here as self-cycles) is given at the right of the figure. The position states $\mb{x}_5$ and $\mb{x}_6$ represent the parent SCCs (or parent nodes). Outputs $\mb{y}_1$ and $\mb{y}_2$ ensure structural observability of the system graph according to Theorem~\ref{thm_scc}.}}
			\label{fig_nca}
		\end{figure}
	\end{exm}
	
	Next, considering the communication network $\mc{G}_W$, we derive the graph-theoretic condition for distributed observability.
	
	\begin{thm} \label{thm_main2}
		Let the cyclic system graph $\mc{G}_A$ and the output matrix $C$ satisfy the structured observability condition in Theorem~\ref{thm_cent}. If the graph $\mc{G}_W$ is strongly connected,  then the pair $(W \otimes A, D_C)$ is observable.
	\end{thm}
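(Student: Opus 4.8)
The plan is to establish $(W \otimes A, D_C)$-observability \emph{structurally} via Theorem~\ref{thm_cent}: it suffices to show that in the system digraph of the Kronecker pair, every state node has a directed path to an output. First I would fix notation for $\mc{G}_{W\otimes A}$, labelling each state node by a pair $(i,s)$, where $i\in\{1,\dots,n\}$ indexes the CAV copy and $s\in\{1,\dots,Nm\}$ indexes the HDV state. Since $(W\otimes A)_{(i,s),(j,t)}=w_{ij}a_{st}$, a directed edge $(j,t)\to(i,s)$ is present exactly when $w_{ij}\neq 0$ \emph{and} $a_{st}\neq 0$, i.e.\ when $j\to i$ is an edge of $\mc{G}_W$ and $t\to s$ is an edge of $\mc{G}_A$. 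On the output side, because $D_C$ is block diagonal with $i$th block $\sum_{j\in\mc{N}_i}C_j^\top C_j$ (a sum of positive-semidefinite terms, hence free of cancellation), the node $(i,s)$ is directly measured precisely when some neighbour $j\in\mc{N}_i$ senses state $s$, i.e.\ column $s$ of $C_j$ is nonzero.

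The crucial structural ingredient is the presence of self-loops in both factor graphs: the row-stochastic consensus design has $\mc{N}_i\ni i$, so $w_{ii}\neq 0$, and the self-damped (cyclic, full-rank) dynamics give $a_{ss}\neq 0$ for every $s$. These self-loops let me advance one coordinate while freezing the other. Concretely, using the $W$-self-loop at $i$, any edge $t\to s$ of $\mc{G}_A$ lifts to $(i,t)\to(i,s)$, so any path of $\mc{G}_A$ lifts to a path inside the fixed copy $i$; dually, using the $A$-self-loop at $s$, any edge $j\to i$ of $\mc{G}_W$ lifts to $(j,s)\to(i,s)$, so any path of $\mc{G}_W$ lifts to a path at the fixed state coordinate $s$. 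This also shows that $\mc{G}_{W\otimes A}$ carries a self-loop at every $(i,s)$ and is therefore cyclic, so Theorem~\ref{thm_cent} indeed applies.

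With these lifts in hand I would construct the required path from an arbitrary node $(i,s)$ in two stages. Stage one uses the hypothesis that $(A,C)$ satisfies Theorem~\ref{thm_cent}: state $s$ has a directed path in $\mc{G}_A$ to some sensed state $s^\ast$, which is measured by some CAV $j_0$; lifting this path into copy $i$ yields $(i,s)\to\cdots\to(i,s^\ast)$. Stage two uses strong connectivity of $\mc{G}_W$: there is a path from $i$ to $j_0$, which lifts at the fixed coordinate $s^\ast$ to $(i,s^\ast)\to\cdots\to(j_0,s^\ast)$. Since $j_0\in\mc{N}_{j_0}$ and $j_0$ senses $s^\ast$, the diagonal entry $\|C_{j_0}e_{s^\ast}\|^2>0$ makes column $s^\ast$ of the $j_0$th block nonzero, so $(j_0,s^\ast)$ is directly measured. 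Hence $(i,s)$ reaches an output, and since $(i,s)$ was arbitrary, Theorem~\ref{thm_cent} gives structural observability of $(W\otimes A, D_C)$.

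I expect the main obstacle to be reconciling the \emph{block-diagonal, local} nature of $D_C$ with the need for a \emph{global} observation: the copy $i$ containing $(i,s^\ast)$ generally does not sense $s^\ast$ itself, because only a CAV $j_0$ elsewhere in the network does. The role of strong connectivity is exactly to transport the state coordinate to a copy that is output-connected, while the role of the self-loops is to make the two routings (through $\mc{G}_A$ and through $\mc{G}_W$) independent and composable. A secondary point to handle carefully is the no-cancellation claim for $D_C$ — that summing the $C_j^\top C_j$ cannot annihilate a measured column — which holds precisely because each summand is positive semidefinite.
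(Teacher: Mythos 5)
Your overall strategy---verifying the path-to-output criterion of Theorem~\ref{thm_cent} directly on the Kronecker graph by lifting $\mc{G}_A$-paths into a fixed copy (via $w_{ii}\neq 0$) and $\mc{G}_W$-paths at a fixed state coordinate---is a genuinely different route from the paper's proof. The paper argues at the level of SCC decompositions and irreducible blocks: it shows that the copies of each parent SCC of $\mc{G}_A$, glued together across the irreducible structure of $W$, merge into giant parent SCCs of $\mc{G}_W\otimes\mc{G}_A$ with the same partial order as in $\mc{G}_A$, observes that every such giant parent SCC receives an output from $D_C$, and then invokes Theorem~\ref{thm_scc}. Your explicit two-stage path construction is more elementary and self-contained, and your treatment of the no-cancellation issue in $D_C$ (the relevant diagonal entries of $\sum_{j\in\mc{N}_i}C_j^\top C_j$ are sums of squares, hence nonzero whenever some neighbour senses the state) is a point the paper glosses over entirely.

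However, there is one genuine gap: you assert that the ``self-damped (cyclic, full-rank) dynamics give $a_{ss}\neq 0$ for every $s$.'' That implication is false. In the paper, \emph{cyclic} means $\mc{G}_A$ contains a family of disjoint cycles spanning all nodes; self-loops at every node (the self-damped case) are a \emph{sufficient} condition for this, not a consequence, and the theorem is stated for cyclic $\mc{G}_A$. For instance $A=\bigl(\begin{smallmatrix}0&1\\1&0\end{smallmatrix}\bigr)$ is full rank and cyclic but has no self-loops; there your stage-two lift $(j,s^\ast)\to(i,s^\ast)$, which requires $a_{s^\ast s^\ast}\neq 0$, does not exist, and neither does the claimed self-loop at every node $(i,s)$ that you use to conclude cyclicity of $\mc{G}_W\otimes\mc{G}_A$. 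The repair is short but must be made: let $\ell$ be the length of the cycle of the spanning cycle family passing through $s^\ast$; to move from copy $i$ to copy $j_0$ while returning to coordinate $s^\ast$, traverse the $\mc{G}_W$-path from $i$ to $j_0$ padded with self-loops at $j_0$ (recall $w_{j_0 j_0}\neq 0$) until its total length is a multiple of $\ell$, while simultaneously winding around the $A$-cycle through $s^\ast$; every step of this combined walk uses an existing $W$-edge and an existing $A$-edge, and it terminates at $(j_0,s^\ast)$. Likewise, cyclicity of the Kronecker graph follows by lifting the spanning cycle family of $\mc{G}_A$ into each copy via $w_{ii}\neq 0$, rather than from node-wise self-loops. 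With these modifications your argument proves the theorem in the stated generality; as written, it proves it only for self-damped systems (admittedly the case the paper itself emphasizes, e.g., the NCA/NCV models).
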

	\begin{proof}
		The proof methodology follows from \cite[chapter~1]{rein_book}. Let the composite Kronecker network~$\mc{G}_W \otimes \mc{G}_A$ represent the structure of $(W \otimes A)$ matrix. First, recall that for an SC $\mc{G}_W$ network, its $W$ matrix is irreducible and the diagonal entries of $W$ matrix are all nonzero since in the distributed setup, each CAV uses its own information (i.e., its own prediction of the HDV states) for data fusion. 
		
		Therefore, the main diagonal blocks of $(W \otimes A)$ are the same as system matrix~$A$. This follows the definition of the Kronecker product of matrices as multiplying a nonzero scalar $w_{ij}$ in~$A$ does not change its structure. Let consider the system graph $\mc{G}_A$ with child and parent SCCs~$\mc{S}^c_i,\mc{S}^p_j$ for which~$\mc{S}^c_i \prec \mc{S}^p_j$. Denote the irreducible blocks associated with these SCCs as $A_{ii}$ and $A_{jj}$, respectively. Then, as illustrated in Fig.~\ref{fig_matrix2}, $\mc{S}^c_i \prec \mc{S}^p_j$ implies a sequence of irreducible blocks from $A_{ii}$ to $A_{jj}$, shown by grey arrows. 
		Then, the non-diagonal blocks of~$(W \otimes A)$ are mapped based on the irreducible structure of $W$.
		\begin{figure}
			\centering
			{\includegraphics[width=2.75in]{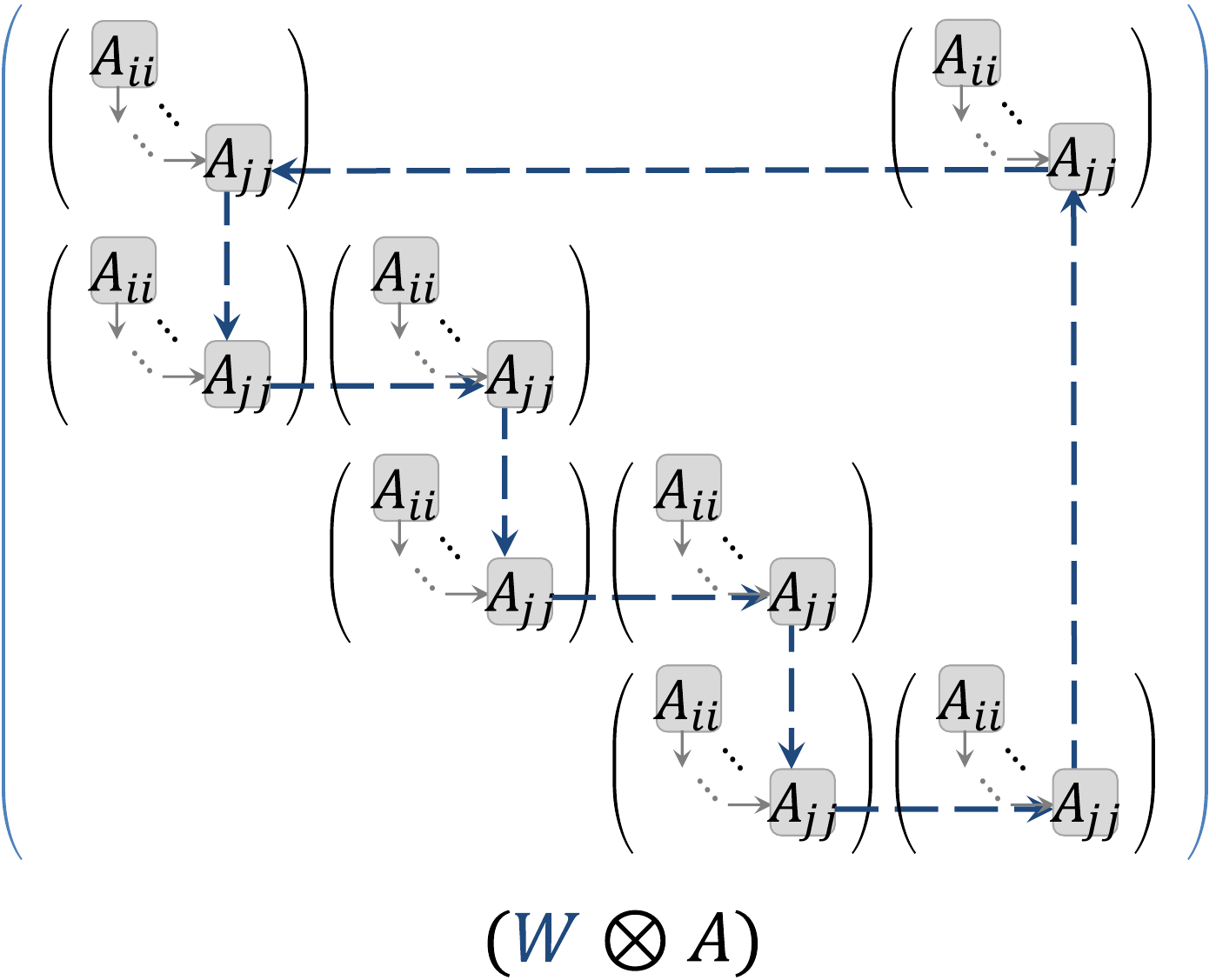}}
			\caption{ This figure presents the structure of~$W \otimes A$ matrix. Assume that every reducible block of system matrix~$A$ contains a smaller irreducible block~$A_{ii}$ (representing a child SCC in~$\mc{G}_A$) which forms a path to another irreducible block~$A_{jj}$ (representing a parent SCC in~$\mc{G}_A$). From the definition of $W \otimes A$, the blocks of $A$ matrices follow the irreducible structure of the adjacency matrix~$W$ which represents the SC communication network $\mc{G}_W$. Following the dashed blue arrows, one can find a giant irreducible block formed by $A_{jj}$s, the irreducible block obtained after a suitable permutation that collects all blocks connected by parent-child SCC paths. This giant block forms a giant parent SCC in the network $\mc{G}_W \otimes \mc{G}_A$.} 
			\label{fig_matrix2}
		\end{figure}
		One can see a sequence of non-diagonal blocks in~$(W \otimes A)$ sharing no hyper-rows and no hyper-columns\footnote{Define a hyper-row (or hyper-column) as an $Nm$ by $nNm$ row (or column) matrix formed by $A$ matrices in~$(W \otimes A)$ matrix.}. The irreducible blocks associated with parent SCCs~$\mc{S}^p_j$ (in every system matrix~$A$) form an SC path that follows the (irreducible) structure of~$W$, shown by dashed blue arrows in Fig.~\ref{fig_matrix2}. This implies a giant irreducible block in~$(W \otimes A)$ that can be obtained by row/column permutation and forms a giant SCC in~$\mc{G}_W \otimes \mc{G}_A$. This giant SCC has no outgoing links and, therefore, is a parent SCC in~$\mc{G}_W \otimes \mc{G}_A$. The irreducible structure can be generalized for any SCC in~$\mc{G}_A$, i.e., for every (child and parent) SCC in~$\mc{G}_A$, one can find a giant SCC in~$\mc{G}_W \otimes \mc{G}_A$ with similar partial order as in~$\mc{G}_A$. Then, from the structure of $D_C$ in Eq.~\eqref{D_H}, every giant parent SCC in $\mc{G}_W \otimes \mc{G}_A$ is associated with an output and, from Theorem~\ref{thm_scc}, the sufficient output-connectivity for observability holds. This completes the proof.
	\end{proof}
	Recall that Theorem~\ref{thm_main2} proves strong-connectivity of $\mc{G}_W$ as the \textit{sufficient} condition for distributed observability. This condition implies that the information of parent nodes/SCCs (e.g., nodes $\mb{x}_5$ and $\mb{x}_6$ in Fig.~\ref{fig_nca}) is measured and shared over the network as there exists a path from every sensor to every other sensor over $\mc{G}_W$. Otherwise, in case there is no path between two sensors (for example, from sensor $i$ to sensor $j$), the information of some state nodes (measured by sensor $i$) is not reachable to another sensor (for example, sensor $j$). However, this network connectivity condition can be more relaxed in particular cases by adding more sensor measurements. For example, if two sensors $i,j$ measure the same parent node/SCC, there is no need to have a path between them. In such cases, the strong-connectivity condition is only sufficient and not necessary.   
	
		\begin{rem}
			The above theorems prove $(W\otimes A, D_C)$-observability under certain conditions in structural sense. It is known that structural properties (including observability) are generic and hold for almost all numerical values of system parameters except for an algebraic subspace of zero Lebesgue measure \cite{woude:03}. This implies that assessing $(W\otimes A, D_C)$-observability via numerical test, i.e., Grammian rank, almost always gives the same result except for special conditions, e.g., ill-conditioned systems or when certain nodes/links fail. In case of link/node failure, the new structure of the system/consensus matrix needs to be considered for observability analysis. The structural results are widely used for observability analysis (instead of Grammian rank); for example, see \cite{pequito2015framework,ijss,pequito2013structured} for structural sensor placement based on system observability constraint.
	\end{rem}
	
	\subsection{Redundant Survivable Network Design} \label{secsub_redund}
	One key concept that underpins robust network architecture is \textit{redundant network design} or \textit{survivable network design} \cite{son2022situation}. This is based on the fact that one can add redundancy in terms of the number of links and connectivity of the ITS to preserve observability despite missing some information and losing some connections. These concepts rely heavily on the mathematical notions of $q$-node-connected graphs and $q$-link-connected graphs.
	
	\begin{defn}
		A graph is said to be $q$-node-connected (or $q$-vertex-connected) if it remains connected whenever fewer than $q$ nodes are removed. This property ensures that the network can sustain the failure of up to $q-1$ nodes without becoming disconnected. The parameter $q$ represents the level of redundancy in the network.
	\end{defn}
	\begin{defn}
		A graph is $q$-link-connected (or $q$-edge-connected) if it remains connected whenever fewer than $q$ links are removed, i.e., it can preserve (strong) connectivity in the face of up to $q-1$ link removal/failure. This concept focuses on the resilience of the network to the failure of connections (links) rather than nodes.       
	\end{defn}
	
	In this regard, redundant (or survivable) network design aims to create networks that have built-in redundancy to enhance resiliency and fault tolerance. Note that some works provide approximation algorithms to design redundant networks, see \cite{jabal2021approximation,diarrassouba2024optimization} for example. By designing networks that are $q$-node-connected or $q$-link-connected, redundancy is ensured, making the network more robust against failures. This (i) ensures that multiple independent paths exist between nodes to allow for message-passing (or data-sharing) in case of failures, and (ii) includes additional nodes and links as backups that can take over in case of primary component failures, e.g., to recover the loss of observability. In this direction,  \textit{Menger's theorem} is a relevant concept. This theorem is a fundamental result in graph theory that provides a precise characterization of the connectivity properties of graphs and it connects the concepts of node and link connectivity with the number of disjoint paths between nodes.
	
	\begin{thm}
		(Menger's theorem) For any two distinct nodes 
		$i$ and $j$ in the graph $\mc{G}_W$, 
		\begin{itemize} 
			\item let $\lambda(i,j)$ be the maximum number of link-disjoint paths (paths that do not share any links) between $i$ and $j$ and $q(i,j)$ be the minimum number of links whose removal disconnects $i$ from $j$. Then, $\lambda(i,j)=q(i,j)$.
			\item let $\lambda'(i,j)$ be the number of pairwise internally disjoint paths (paths that do not share any nodes) between $i$ and $j$ and $q'(i,j)$ be the minimum number of nodes whose removal disconnects $i$ from $j$. Then, $\lambda'(i,j)=q'(i,j)$.
		\end{itemize}  
	\end{thm}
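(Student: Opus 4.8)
The plan is to prove the link-disjoint statement first by reducing it to the max-flow min-cut theorem on a unit-capacity network, and then to obtain the node-disjoint statement by a vertex-splitting transformation that converts node separators into link separators. In both parts the ``easy'' inequality is direct, and the substance of the theorem lies in the reverse inequality, which is where the flow machinery does the real work.

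For the link part, I would first record the immediate bound $\lambda(i,j) \le q(i,j)$: every $i$--$j$ path must traverse at least one link of any link set whose removal separates $i$ from $j$, and link-disjoint paths use pairwise distinct links, so no more than $q(i,j)$ of them can coexist. To establish the reverse direction, I would build a flow network on the directed graph $\mc{G}_W$ by assigning capacity $1$ to every link, taking $i$ as source and $j$ as sink. By integral flow decomposition, any integer-valued $i$--$j$ flow of value $f$ splits into $f$ link-disjoint $i$--$j$ paths together with (irrelevant) cycles, while conversely $k$ link-disjoint paths induce a feasible integer flow of value $k$; hence the maximum flow value equals $\lambda(i,j)$. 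Since all capacities are $1$, the capacity of any edge cut equals the number of links it contains, so the minimum cut capacity equals $q(i,j)$. The max-flow min-cut theorem then yields $\lambda(i,j) = q(i,j)$.

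For the node part, I would apply the same argument to a transformed graph $\mc{G}_W'$ obtained by splitting each internal node $v \notin \{i,j\}$ into an in-copy $v^{-}$ and an out-copy $v^{+}$ joined by an internal link $v^{-}\to v^{+}$ of capacity $1$, while every original link $(u,v)$ is redirected to $u^{+}\to v^{-}$ and assigned a large (effectively infinite) capacity. A link-disjoint family of $i$--$j$ paths in $\mc{G}_W'$ corresponds bijectively to an internally node-disjoint family in $\mc{G}_W$, so the max flow in $\mc{G}_W'$ equals $\lambda'(i,j)$; and because only the internal unit-capacity links are ever worth cutting, a minimum cut in $\mc{G}_W'$ is a set of internal links corresponding exactly to a minimum node-separating set in $\mc{G}_W$, giving minimum cut capacity $q'(i,j)$. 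Invoking the already-proved link version on $\mc{G}_W'$ then delivers $\lambda'(i,j)=q'(i,j)$.

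The hard part will be the reverse inequality, namely certifying that one can actually realize as many disjoint paths as the size of a minimum cut, and this is exactly where integrality is essential: on a unit-capacity network the optimum is attained by an integral flow (for instance via the integrality of the Ford--Fulkerson augmenting-path procedure), so the optimal flow decomposes into genuine disjoint paths rather than fractional ones. For the node version the remaining care is to check that assigning infinite capacity to the redirected original links forces every minimum cut to consist solely of internal links, which is what guarantees the one-to-one correspondence between minimum cuts in $\mc{G}_W'$ and node-separators in $\mc{G}_W$. An alternative route avoiding flows is a direct induction on the number of links, but I expect the max-flow min-cut reduction to be the cleanest to carry out rigorously.
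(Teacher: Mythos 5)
Your proof is correct, but it necessarily takes a different route from the paper, because the paper gives no argument at all: its ``proof'' of this theorem is a one-line citation to the graph-theory textbook \cite{modern_graph}. What you supply is the classical self-contained derivation of Menger's theorem from max-flow min-cut: unit capacities make the edge version equivalent to integral flow decomposition, and the vertex-splitting gadget (unit-capacity internal link $v^{-}\to v^{+}$, effectively infinite capacity on the redirected original links) converts node separators into link cuts for the vertex version. Both reductions are sound, and your appeal to Ford--Fulkerson integrality avoids any circularity, since max-flow min-cut has an independent augmenting-path proof. Two small points would deserve attention in a full write-up: (i) the paper allows $\mc{G}_W$ to be directed or undirected, and your argument is phrased for the directed case; the undirected case needs the standard device of two antiparallel unit-capacity arcs per edge, together with the observation that an optimal integral flow can be taken to use at most one of each pair. (ii) In the node version, if $i$ and $j$ are adjacent then no node set separates them and $q'(i,j)$ is undefined; the classical statement assumes non-adjacency. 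This is really a gap in the paper's statement rather than in your proof---your construction degenerates consistently, with both quantities becoming infinite---but it merits a remark. What your approach buys over the paper's citation is an explicit, even algorithmic, certificate (augmenting paths actually produce the disjoint path family); what the citation buys is brevity, which is defensible here since Menger's theorem is standard background rather than a contribution of the paper.
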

	
	\begin{proof}
		The proof is given in \cite{modern_graph}.
	\end{proof}
	
	Menger's Theorem is significant since it provides a clear relationship between connectivity and path-disjointness in graphs. It helps in both investigating and designing $q$-link-connected ($q$-node-connected) networks by ensuring that there are enough disjoint paths to maintain connectivity despite link failures (or to handle node failures). Some example networks are investigated in Table~\ref{tab_q}.
	\begin{table} [h] 
		\centering
		\caption{Comparison between $q$-node connectivity and $q$-link connectivity of different graphs of size $n$. }
		\label{tab_q}
		\begin{tabular}{|c|c|c|} 
			\hline
			\hline
			Graph Type & Node connectivity & Link connectivity   \\
			\hline
			complete & $n$ &  $n$  
			\\   
			\hline
			cycle &  $2$ &  $2$   
			\\   
			\hline 
			star &  $1$    & $1$
			\\
			\hline
			path &  $1$    & $1$
			\\
			\hline
			$m$-nearest neighbour ring &  $2m$    & $2m$
			\\
			\hline
			\hline
		\end{tabular}
	\end{table}

	\begin{exm} \label{exm_mring}
		An $m$-nearest neighbour cycle (or ring) network is a type of graph topology where each node is connected to its 
		$m$ nearest neighbours in a circular (ring) arrangement. This topology generalizes the simple cycle (ring) network by allowing each node to have more than two direct connections, thereby increasing the redundancy and fault tolerance. An example $2$-nearest neighbour ring network of vehicles is shown in Fig.~\ref{fig_mring}.
		\begin{figure} 
			\centering
			\includegraphics[width=1.75in]{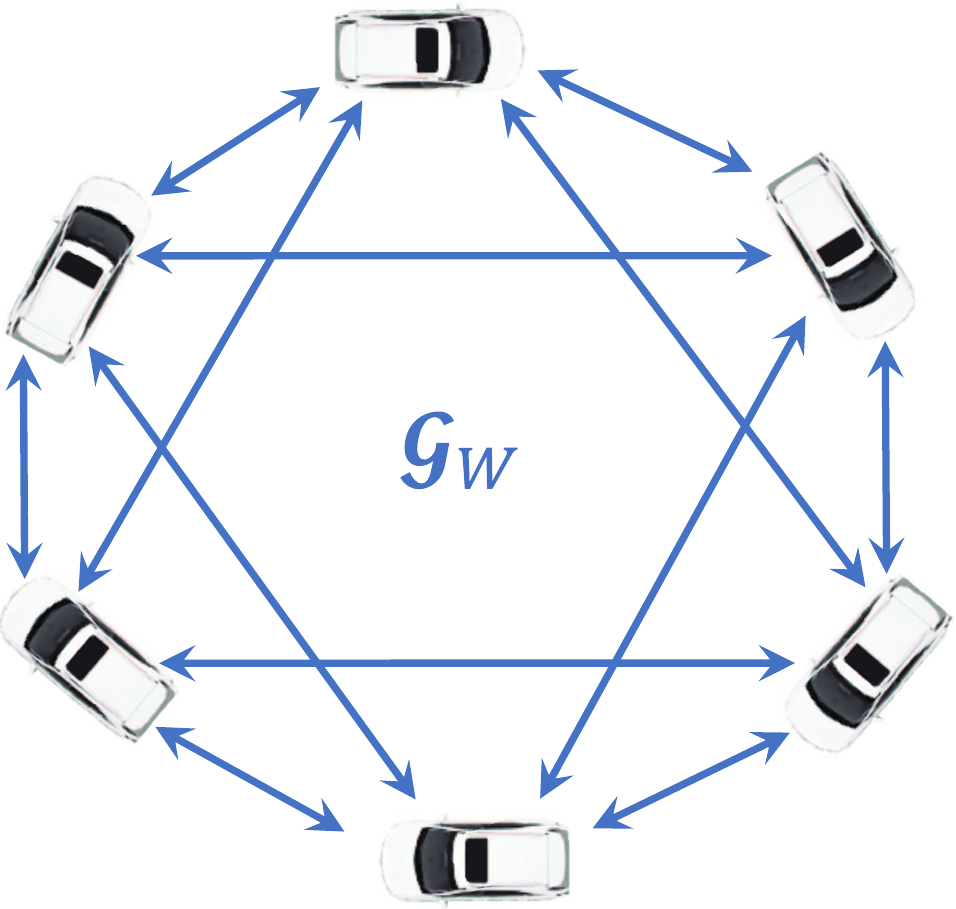}
			\caption{An example vehicle network of $2$-nearest neighour ring. 
			} \label{fig_mring}
		\end{figure}
		This network is $4$-node/link-connected, implying that it remains connected by removal of up to $3$ (randomly chosen) nodes/links.
	\end{exm}
	
	\subsection{Redundant Distributed Observer Design} \label{subsec_obsrv}
	In this section, a consensus-based single-time-scale redundant distributed observer is proposed. In contrast to the double-time-scale observer design in \cite{he2021secure,qian2022consensus,battilotti2021stability}, it only performs one step of consensus between two consecutive samples of ITS dynamics, i.e., it has no inner consensus loop. This significantly reduces the communication, data-sharing, and computation loads as compared to \cite{he2021secure,qian2022consensus,battilotti2021stability}. The observer includes one step of consensus on a priori estimates (or prediction) and one step of local observation-update (or innovation) as follows: 
	\begin{align}\label{eq_p} 
		\widehat{\mb{x}}_{k|k-1}^i &= \sum_{j\in\mathcal{N}_i} w_{ij}A\widehat{\mb{x}}^j_{k-1|k-1},
		\\ 
		\widehat{\mb{x}}_{k|k}^i &= \widehat{\mb{x}}_{k|k-1}^i +K_i C_i \left({y}_{i,k}-C_i^\top \widehat{\mb{x}}_{k|k-1}^i\right), \label{eq_m}
	\end{align}
	with $K_i$ as the local observer gain at node $i$ and matrix $W$ satisfying row-stochastic property for consensus. The observer gain matrix $K_i$ is locally designed for each CAV and, thus, the global matrix $K$ is block-diagonal. This feedback gain is designed by solving the following linear-matrix-inequality (LMI), which is based on the polynomial-order cone-complementary algorithms in \cite{rami:97}:
	\begin{equation} \label{eq_min}
		\begin{aligned}
			\displaystyle
			\min
			~~ &  \mathbf{trace}(XY) \\
			\text{s.t.}  ~~& X,Y\succ 0,\\ ~~ & \left( \begin{array}{cc} X&\widehat{A}^\top\\ \widehat{A}&Y\\ \end{array} \right) \succ 0,~ \left( \begin{array}{cc} X&I\\ I&Y\\ \end{array} \right) \succ 0,\\
			~~ & K\mbox{~is~block-diagonal}.\\
		\end{aligned}
	\end{equation} 
	where $\widehat{A} = W\otimes A - KD_C(W\otimes A)$.  
	The proposed distributed observer is summarized in Algorithm~\ref{alg_ac}.
	\begin{algorithm} 
		\KwData{ Vehicle dynamics matrix $A$,  $q$-node/link-connected network $\mc{G}_W$}
		\KwResult{Estimated state $\widehat{\mb{x}}^i_{k|k}$ }
		{\textbf{Initialization:} $k=1$\;
			Design the row-stochastic consensus matrix $W$ associated with $\mc{G}_W$ (simply set $w_{ij}=\frac{1}{|\mc{N}_i|}$ for all $j \in \mc{N}_i$)\; 
			Design $K$ via LMI \eqref{eq_min}\;
		}
		
		\While{monitoring the ITS states\;
		}{Vehicle $i$ receives $\widehat{\mb{x}}^j_{k-1|k-1}$ from neighboring vehicles $j\in \mc{N}_i$\;
			Vehicle $i$ updates its state $\widehat{\mb{x}}_{k|k}^i$ via dynamics \eqref{eq_p}-\eqref{eq_m}\;
			$k\leftarrow k+1$\;
		}
		\caption{\textsf{The Distributed Redundant Observer}} \label{alg_ac}
	\end{algorithm} 
	
	Define $\mb{e}_{k}^i:=  \mb{x}_{k}-\widehat{\mb{x}}_{k|k}^i$ as the observer error at the $i$th CAV. Then, mixing the observer dynamics~\eqref{eq_p}-\eqref{eq_m} with system dynamics~\eqref{eq_sys1} and output model~\eqref{eq_H_i}, it follows that,
	\begin{align}
		\mb{e}_{k}^i =& \sum_{j\in \mathcal{N}_i}w_{ij}A(\mb{x}_{k-1} - \widehat{\mb{x}}^j_{k-1|k-1})  \nonumber \\ \label{eq_ei}
		&- K_i C_i^\top C_i \sum_{j\in \mathcal{N}_i}w_{ij}A(\mb{x}_{k-1} - \widehat{\mb{x}}^j_{k-1|k-1})+ \zeta_k
	\end{align}
	with $\zeta_k$ collecting the noise-related terms. In compact form, define the global error at all nodes as, 
	\begin{align}\nonumber
		\mb{e}_{k} &= \left( \begin{array}{c}
			\mb{e}_{k}^1\\
			\vdots \\
			\mb{e}_{k}^n
		\end{array}\right).
	\end{align}
	Then, one can find the global error dynamics by concatenating the local dynamics given by \eqref{eq_ei} as follows:
	\begin{align}\label{eq_err1}
		\mb{e}_{k} = (W\otimes A - KD_C(W\otimes A))\mb{e}_{k-1} +
		\zeta_k = \widehat{A}\mb{e}_{k-1} +
		\zeta_k ,
	\end{align}
	For Schur stability of Eq.~\eqref{eq_err1}, following from Kalman stability theorem \cite{bay}, we need $(W\otimes A,D_C)$-observability. This verifies the formulation of distributed observability by \eqref{eq_dist_obsrv} in Section~\ref{sec_prob}. Therefore, from the results of Section~\ref{secsub_main}, the distributed observability is satisfied by:
	\begin{itemize}
		\item taking one output from every parent SCC in $\mc{G}_A$ (i.e., centralized observability\footnote{Recall that centralized observability is a necessary condition for distributed observability. In other words, if the pair $(A,C)$ is not  observable, there is no network $\mc{G}_W$ to satisfy $(W\otimes A,D_C)$-observability.}), and
		\item designing network $\mc{G}_W$ as an SC graph topology, i.e., an irreducible structure for $W$ matrix (while satisfying row-stochasticity).
	\end{itemize}
	
	In fact, the LMI design~\eqref{eq_min} provides a block-diagonal gain $K=\mbox{diag}[K_i]$ to ensure the Schur stability of the error dynamics~\eqref{eq_err1} (i.e., Schur stability of matrix $\widehat{A}$). Although in a centralized design with general (not necessarily block-diagonal) $K$, the LMI can be augmented with the contraction factor for an exponential decay rate of the error dynamics, in the distributed case with block-diagonal constraint on $K$ it is not straightforward to restrict the spectral radius of $\widehat{A}$ via convex LMI constraints to impose a decay rate of the error dynamics. Such a constrained LMI design for $K=\mbox{diag}[K_i]$ with pole-region constraints (to address settling-time, overshoot, etc.) is still an open problem and one direction of our future research. 
	
	\begin{rem}
		The proposed observer can be made resilient by the redundant design of $\mc{G}_W$ network as discussed in Section~\ref{secsub_redund}. In this direction, for $q$-redundant distributed observability, two conditions need to be satisfied:
		\begin{itemize}
			\item taking output of $q$ number of states from every parent SCC in $\mc{G}_A$, and
			\item designing a $q$-node/link-connected network topology  $\mc{G}_W$.
		\end{itemize}
	\end{rem}

		Table~\ref{tab_compare} compares the proposed distributed estimator with other existing distributed estimation techniques. As compared with \cite{chen2023distributed,WANG20174039,kar2013consensus,das2015distributed,chen2018internet}, no local observability is assumed in our work, which reduces the connectivity requirement on the communication network. As compared with double time-scale scenarios \cite{he2021secure,qian2022consensus,battilotti2021stability,ghods2025resilient}, the proposed estimator \eqref{eq_p}-\eqref{eq_m} only performs one iteration of communication and consensus computation per sample of system dynamics. This is in contrast to $L$ iterations of communication and consensus between two samples $k$ and $k+1$ of system dynamics (with $L$ more than network diameter) in \cite{he2021secure,qian2022consensus,battilotti2021stability,ghods2025resilient}. Therefore, our proposed estimator reduces the communication/computation load on CAVs for practical applications.
	
	\begin{table*} [bpt!] 
		\centering
			\caption{Comparing distributed estimation methods in terms of observability assumption, redundant design, and scale of communication and consensus (computation) iterations for a CAV network of size $n$.  }
			\label{tab_compare}
			\begin{tabular}{|c|c|c|c|c|} 
				\hline
				Literature &  observability & communication & consensus & redundancy  \\
				\hline
				this work  & global-$(A,C)$ &  $n \times 1$ & $1$ & $\checkmark$
				\\   
				\hline
				\cite{chen2023distributed,WANG20174039,kar2013consensus,das2015distributed,chen2018internet} &  local-$(A,C_i)$ &  $3n \times 1$   & $1$ & -
				\\   
				\hline 
				\cite{he2021secure,qian2022consensus,battilotti2021stability,ghods2025resilient} &  global-$(A,C)$ &  $n \times L$    & $L$ & - \\
				\hline 
				\hline
		\end{tabular}
	\end{table*}
	
		\begin{rem}
			It is known that the cone-complementary algorithms for LMI gain design (as in \eqref{eq_min}) are of polynomial-order complexity \cite{nesterov1994interior,ye1993fully}. Moreover, the dynamics \eqref{eq_p}-\eqref{eq_m} is of polynomial-order complexity of $\mc{O}(n^2N^2m^2)$. These imply that the distributed solution can be scaled up computationally for large-scale setups.
			The communication complexity of the proposed estimator is $\mc{O}(n)$ with $n$ as the number of CAVs. This is due to the fact that a strongly-connected network $\mc{G}_W$ (with minimum number of $n$ links) is sufficient for the distributed estimation design, implying that the solution can be scaled up in terms of communication requirement for large-scale networks.
	\end{rem}
	
	\section{Illustrative Example and Simulations} \label{sec_examp}

	In this section, we consider the mixed traffic scenario of $4$ HDVs and $5$ CAVs given in Fig.~\ref{fig_platoon} in which some  CAVs have sensor measurements from the preceding HDVs and some are without sensor measurements, obtaining information via shared data from other CAVs. 
	
	For the simulation of HDVs dynamics, two different models are considered depending on their position in the mixed traffic: a linear free-flow model \cite{ahmed1999modeling} or Helly’s linear car-following model \cite{brackstone1999car}. If the HDV is an ego
		vehicle or its distance from the front vehicle is large enough (e.g., HDVs $3,4$ in Fig.~\ref{fig_platoon}), its dynamics is described by the free-flow model. In other words, for HDVs at a safe distance from their front vehicle (i.e., distance more than a threshold distance $L$), the free-flow equation of motion is considered. However, if the distance to the front vehicle is less than a distance threshold $L$, then the car-following model is adopted to describe its equation of motion (e.g., HDVs $1,2$ in Fig.~\ref{fig_platoon}). The discrete-time version of linear free-flow dynamics is given as follows:
	\begin{align}  \label{eq_free-flow}
		v(k+1) = v(k) + \lambda (v_d(k) - v(k-\tau)) + \epsilon(k)
	\end{align}
	where $v(k)$ is the HDV velocity at the time-step $k$, $\tau$ is a factor denoting the reaction time
	of the HDV, $\lambda$ is a coefficient describing
	how fast the vehicle can track the desired velocity $v_d(k)$,
	and $\epsilon(k) \sim \mc{N}(0,\sigma)$ denoting the zero-mean Gaussian noise.
	
	Helly’s linear car-following dynamics \cite{brackstone1999car} is modeled in discrete-time as:
	\begin{align}  \label{eq_helly}
		v(k+1) = v(k) + \alpha_1 \delta v(k-\tau)+ \alpha_2 (\delta x(k-\tau) -D(k))
	\end{align} 
	where $\delta v(k)$ and $\delta x(k)$ respectively denote the difference of the speed and positions of the HDV and its front vehicle, $D(k) = \beta_1 + \beta_2 v(k-\tau)$ denotes the desired distance, $\alpha_1$ and $\alpha_2$ are constants, $\beta_1$ and $\beta_2$ denote the coefficients
	related to desired distance headway, and $\tau$ is the reaction time constant.  The simulation parameters are given in Table~\ref{tab_sim}.
	\begin{table} [h] 
		\centering
		\caption{The simulation parameters. }
		\label{tab_sim}
		\begin{tabular}{|c|c|c|c|c|c|} 
			\hline
			$\lambda$ & $0.3$  
			&
			$\tau$ &  $10$ &     
			$\alpha_1$ &  $0.5$ \\   
			\hline
			$\alpha_2$ &  $0.125$&
			$\beta_1$ &  $4$  &  
			$\beta_2$ &  $0.05$
			\\
			\hline
			\hline
		\end{tabular}
	\end{table}   
	
	In this simulation, the purpose is to enable all CAVs to estimate the positions and velocities of all HDVs by sharing relevant information over the communication network $\mc{G}_W$. Each CAV considers a nearly-constant-velocity (NCV) model for the $i$-th HDV  as $\mb{x}_{i,k+1}=\tilde{A}_i\mb{x}_{i,k}+\nu_{i,k}$ with $\mb{x}_{i}=[p_{i,x},v_{i,x}]^\top$ and
	\begin{equation}
		\tilde{A}_i=\left(
		\begin{array}{cc}
			1 & T \\
			0 & 1  
		\end{array} \right).
	\end{equation}
	The network $\mc{G}_W$ of CAVs given in Fig.~\ref{fig_platoon} is $1$-node/link-connected, i.e., it is resilient to the isolation/removal of $1$ communication link or sensor node. The entries of its associated adjacency matrix $W$ are set as $\frac{1}{|\mc{N}_i|} = \frac{1}{3}$ which satisfies row-stochasticity, i.e., $\sum_{j=1}^n w_{ij}=1$ (recall that $\mc{N}_i$ includes two neighbouring CAVs and CAV $i$ itself). The shared information includes the state (velocity and position) estimates of HDVs, denoted by $\widehat{\mb{x}}^j_{k-1|k-1}$ in Algorithm~\ref{alg_ac}, among neighbouring CAVs. We adopt the redundant observer design in Algorithm~\ref{alg_ac} to enable every CAV to locally estimate the state of all HDVs with bounded steady-state error. The spectral redius of $W \otimes A$ before the gain design is $1$ and after LMI gain design \eqref{eq_min} the spectral radius of $\widehat{A}$ is $0.974$ which is Schur stable. The \textit{stable} steady-state error is partially due to measurement noise in the model and uncertainty in the model dynamics (e.g., due to $\epsilon(k)$). The measurement noise is considered as $\mu_{i,k} \sim \mc{N}(0,0.1)$.  In the scenario considered, HDV 3, HDV 1 and HDV 2, change their desired velocities at $t=25\rm{s}$, in which HDV 3 and HDV 2  increase their velocity to $30 \rm{m/s}$ and HDV 1 reduces its velocity to $20 \rm{m/s}$. Fig.~\ref{fig_mse1}  and Fig.~\ref{fig_mse2}  present the positions and velocities of HDVs and the estimated velocities by all $5$ CAVs. From the figure, it can be seen that all CAVs reach consensus on the velocity of HDVs and track the HDV velocity with steady-state bounded error (which is due to the noise in the model). 
	
	\begin{figure} 
		\centering
		\includegraphics[width=4.7in]{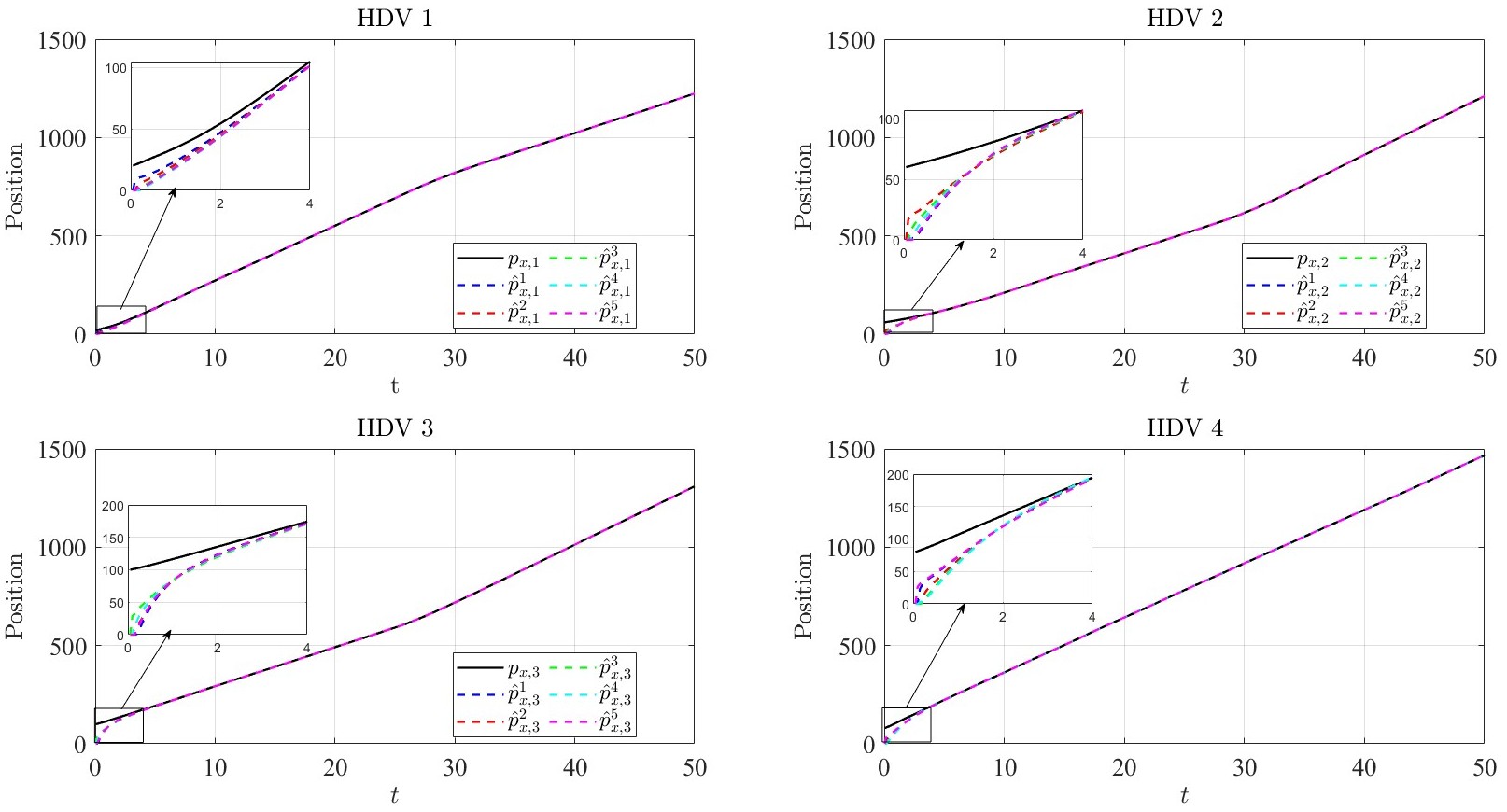}
		\caption{The position of the $i$-th HDV, $p_{x,i}$, $i=1,\dots,4$ and the estimated one, $\hat{p}^j_{x,i}$ by the $j$-th CAV $j=1,\dots,5$ using the distributed observer in Algorithm~\ref{alg_ac}. 
		} \label{fig_mse1}
	\end{figure}
	\begin{figure} 
		\centering
		\includegraphics[width=4.7in]{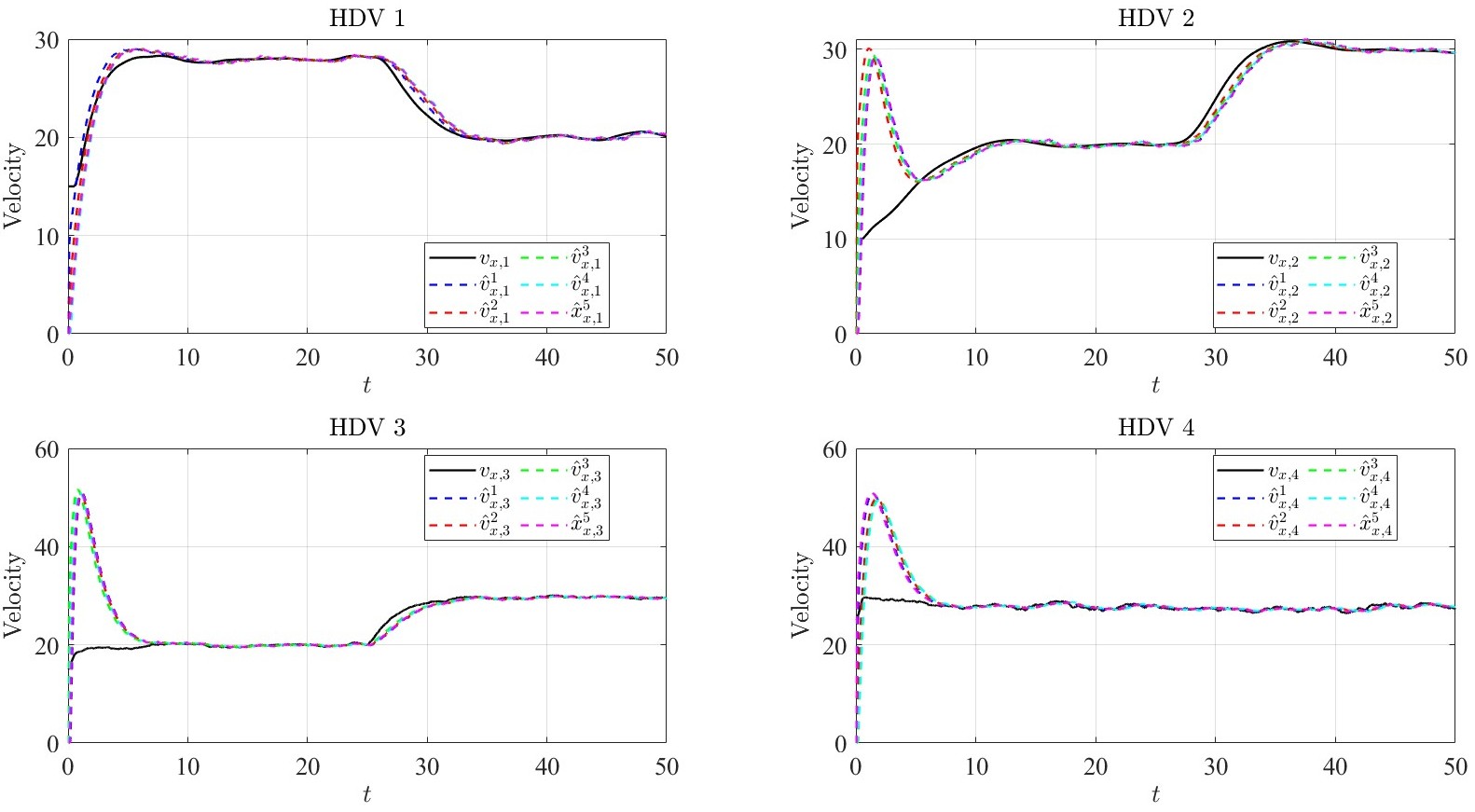}
		\caption{The velocity of the $i$-th HDV, $v_{x,i}$, $i=1,\dots,4$ and the estimated one, $\hat{v}^j_{x,i}$ by the $j$-th CAV $j=1,\dots,5$ using the distributed observer in Algorithm~\ref{alg_ac}.
		} \label{fig_mse2}
	\end{figure}
	
	For the next simulation, we check the resiliency of our redundant observer to link failure. Assume that the link between CAV $1$ and $2$ in Fig.~\ref{fig_platoon} is corrupted due to environmental conditions. Since the network $\mc{G}_W$ is $1$-link-connected, the remaining network after link removal is still strongly-connected. Therefore, from Theorem~\ref{thm_main2}, distributed observability still holds and the error dynamics \eqref{eq_err1} is Schur stabilizable. Note that in the new network setup the CAVs $1$ and $2$ renew their stochastic link weights to $\frac{1}{|\mc{N}_i|} = \frac{1}{2}$. We apply the proposed Algorithm~\ref{alg_ac} to the new setup, and the simulation results are shown in Fig.~\ref{fig_mse_remov1} and Fig.~\ref{fig_mse_remov2}. The spectral redius of $\widehat{A}$ via LMI gain design in \eqref{eq_min} is $0.973$ for this case.  Clearly from the figure, the CAVs are able to track the state  of the HDVs with bounded error and also reach consensus on the state variables. 
	
	\begin{figure} 
		\centering
		\includegraphics[width=4.7in]{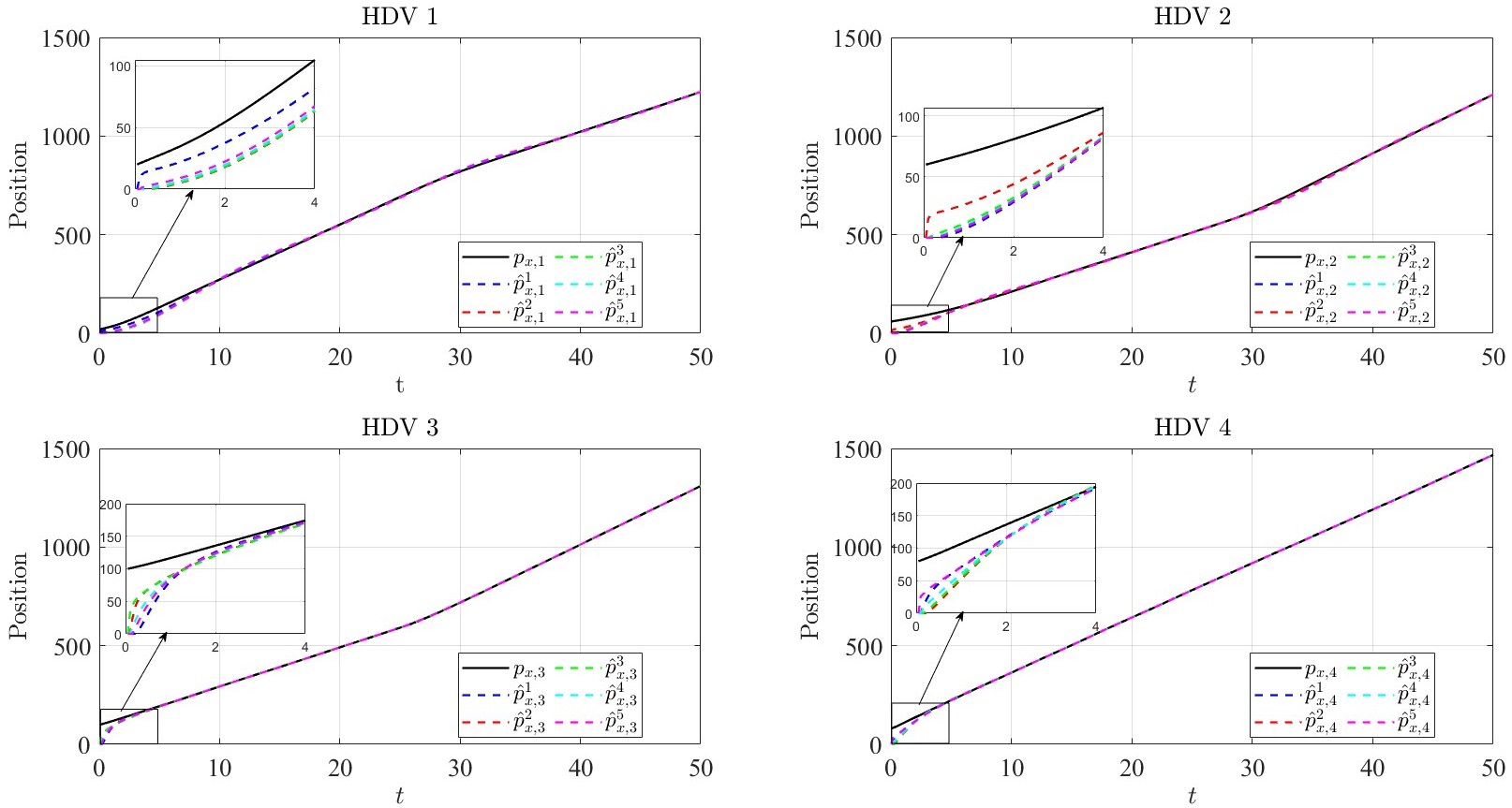}
		\caption{The position of the $i$-th HDV, $p_{x,i}$, $i=1,\dots,4$ and the estimated one, $\hat{p}^j_{x,i}$ by the $j$-th CAV $j=1,\dots,5$ using the distributed observer in Algorithm~\ref{alg_ac} after removing the corrupted link from the communication network of CAVs. 
		} \label{fig_mse_remov1}
	\end{figure}
	\begin{figure} 
		\centering
		\includegraphics[width=4.7in]{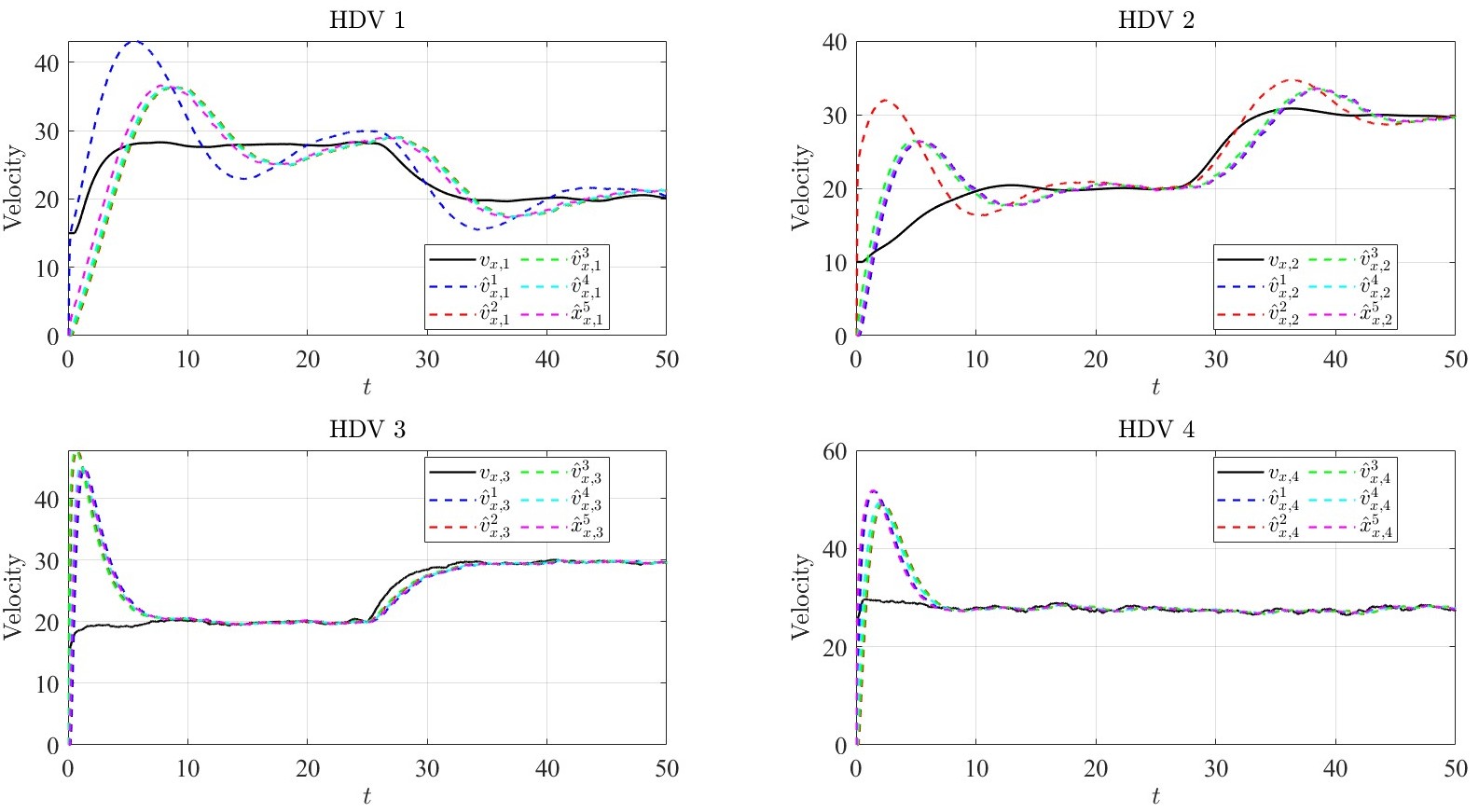}
		\caption{The velocity of the $i$-th HDV, $v_{x,i}$, $i=1,\dots,4$ and the estimated one, $\hat{v}^j_{x,i}$ by the $j$-th CAV $j=1,\dots,5$ using the distributed observer in Algorithm~\ref{alg_ac} after removing the corrupted link from the communication network of CAVs. 
		} \label{fig_mse_remov2}
	\end{figure}
	
	\begin{figure} 
		\centering
		\includegraphics[width=3.5in]{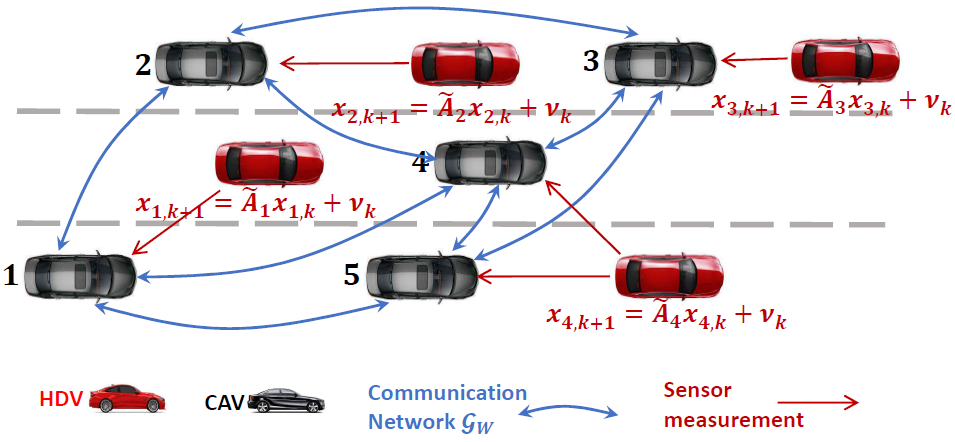}
		\caption{This figure considers the mixed traffic network of CAVs and HDVs with more communications among CAVs to add more redundancy. The network of CAVs is $2$-node/link-connected, implying resiliency to removal of up-to $2$ nodes/links. 
		} \label{fig_platoon_redund}
	\end{figure} 
	
	For the next simulation we consider the redundant network $\mc{G}_W$ of CAVs given in Fig.~\ref{fig_platoon_redund}, which is $2$-node/link-connected. 
		The simulation parameters of the free-flow model at HDVs $3,4$ and the car-following model at HDVs $1,2$ are given in Table~\ref{tab_sim2}.
		\begin{table} [h] 
			\centering
				\caption{The simulation parameters. }
				\label{tab_sim2}
				\begin{tabular}{|c|c|c|c|c|c|} 
					\hline
					$\lambda$ & $0.4$  
					&
					$\tau$ &  $15$ &     
					$\alpha_1$ &  $0.4$ \\   
					\hline
					$\alpha_2$ &  $0.15$&
					$\beta_1$ &  $10$  &  
					$\beta_2$ &  $0.5$
					\\
					\hline
					\hline
			\end{tabular}
		\end{table}
		Similar to the previous case, CAVs approximate the dynamics of HDVs with the NCV model. The measurement noise is  $\mu_{i,k} \sim \mc{N}(0,0.15)$. More complex velocity variation for HDVs is considered for this scenario as shown in Fig.~\ref{fig_redund1} and~\ref{fig_redund2}. The CAVs share velocity/position estimates  $\widehat{\mb{x}}^j_{k-1|k-1}$ according to Algorithm~\ref{alg_ac} and update their estimates using the data received from their neighbours based on dynamics~\eqref{eq_p}-\eqref{eq_m}. The spectral radius of $\widehat{A}$ via LMI gain design in \eqref{eq_min} is $0.974$ for this case. The estimated position and velocity of HDVs at all CAVs are shown in Fig.~\ref{fig_redund1} and~\ref{fig_redund2}, respectively. We compare the mean square estimation error (MSEE) of the position and velocity estimates with the centralized Kalman filter as the benchmark model, see Fig.~\ref{fig_redund_comp} and Fig.~\ref{fig_redund_comp2}, respectively. Note that for the centralized Kalman filter, all sensor measurements are provided to a central data fusion unit, while in our distributed scenario, each CAV has access to only one local sensor measurement and estimates the position/velocity of HDVs based on the data shared over the network over time. In other words, some measurements are not available instantly, and the CAVs get this information over time from the neighbouring CAVs.  This is the reason behind the larger MSEE in the distributed case.   
		\begin{figure} 
			\centering
			\includegraphics[width=4.7in]{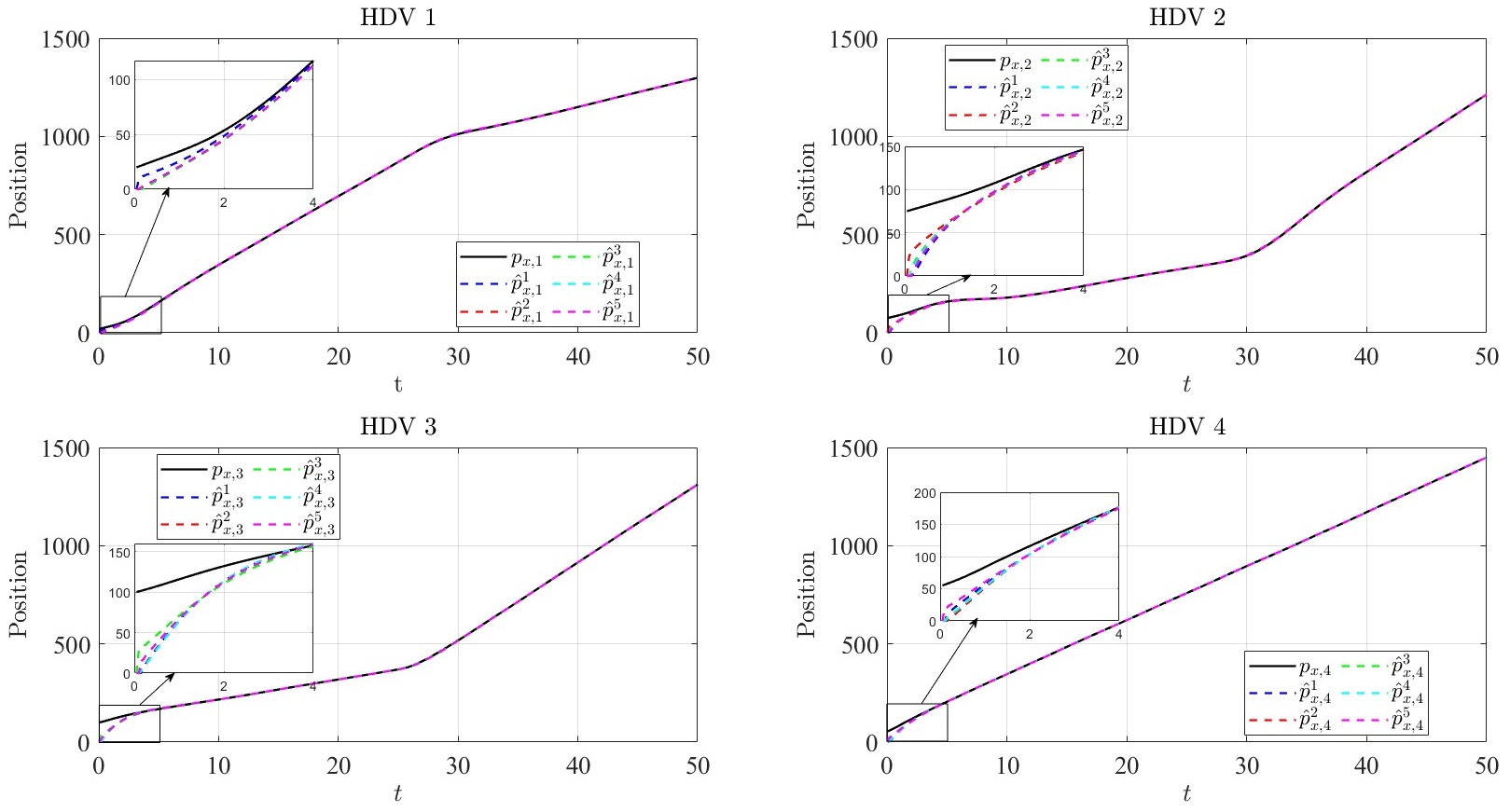}
			\caption{The position of the $i$-th HDV, $p_{x,i}$, $i=1,\dots,4$ and the estimated position, $\hat{p}^j_{x,i}$ by the $j$-th CAV $j=1,\dots,5$ using the distributed observer in Algorithm~\ref{alg_ac} over the redundant network in Fig.~\ref{fig_platoon_redund}.
			} \label{fig_redund1}
		\end{figure}
		\begin{figure} 
			\centering
			\includegraphics[width=2.7in]{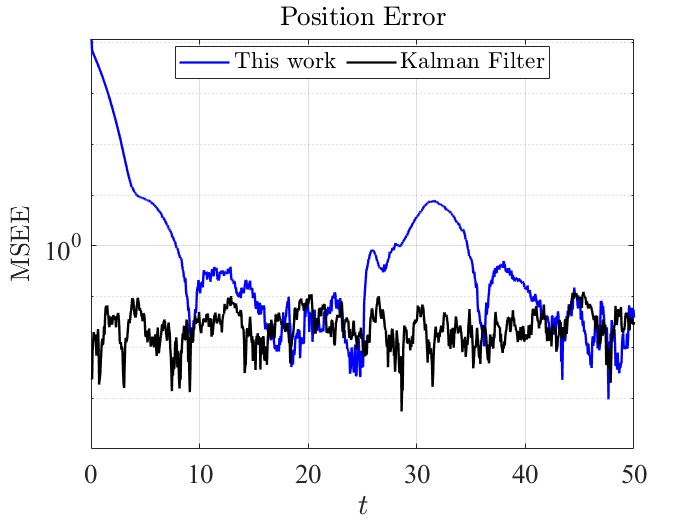}
			\caption{This figure compares the mean square estimation error (MSEE) of position estimates at all CAVs for the proposed distributed estimator and benchmark centralized Kalman filter. The HDV positions over time are presented in Fig.~\ref{fig_redund1}. 
			} \label{fig_redund_comp}
		\end{figure}
		\begin{figure} 
			\centering
			\includegraphics[width=4.7in]{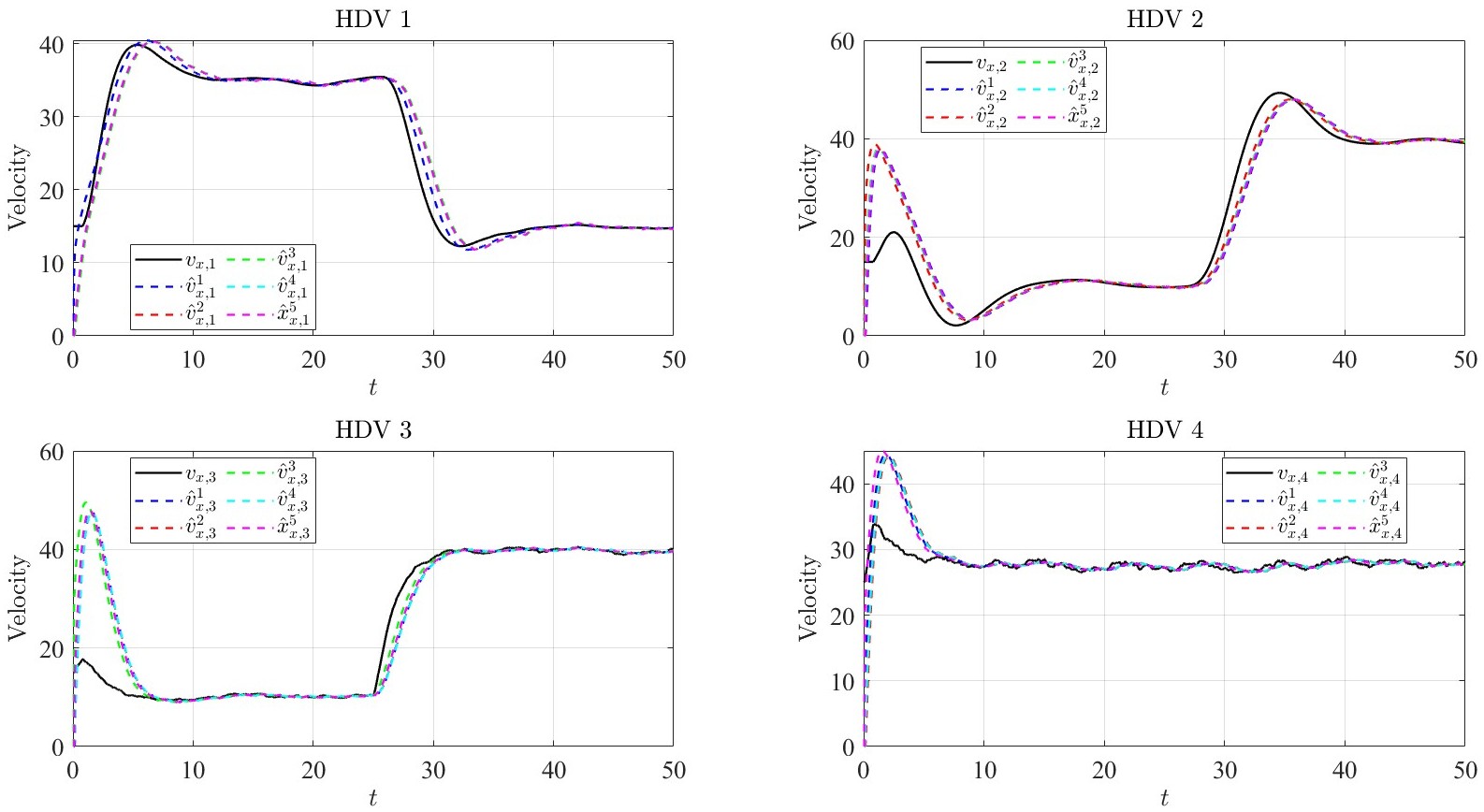}
			\caption{The velocity of the $i$-th HDV, $v_{x,i}$, $i=1,\dots,4$ and the estimated velocity, $\hat{v}^j_{x,i}$ by the $j$-th CAV $j=1,\dots,5$ using the distributed observer in Algorithm~\ref{alg_ac} over the redundant network in Fig.~\ref{fig_platoon_redund}.
			} \label{fig_redund2}
		\end{figure}
		\begin{figure} 
			\centering
			\includegraphics[width=2.7in]{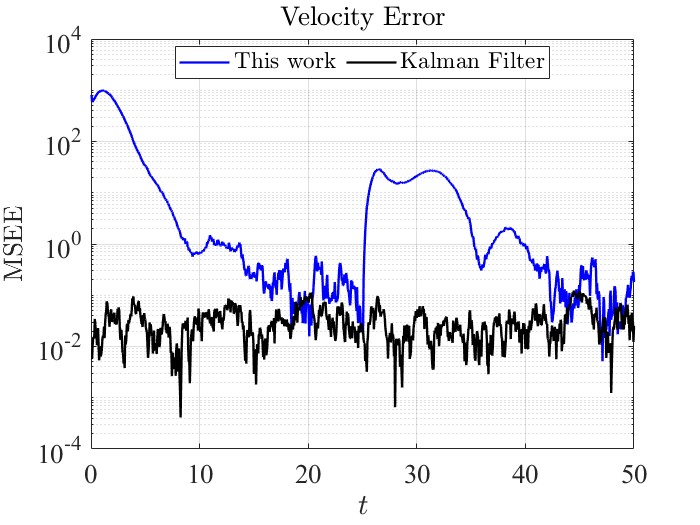}
			\caption{This figure compares the mean square estimation error (MSEE) of velocity estimates at all CAVs for the proposed distributed estimator and benchmark centralized Kalman filter. The HDV velocities over time are presented in Fig.~\ref{fig_redund2}.
			} \label{fig_redund_comp2}
		\end{figure}

		Next, to illustrate the resiliency to node/link failure, we remove CAV node $5$ and the links $(2,4)$ and $(2,3)$. Recall that as explained in Fig.~\ref{fig_platoon_redund}, the network $\mc{G}_W$ is $2$-node/link-connected. We repeat the simulations with the same parameters for this case. The gain matrices at $4$ CAVs are calculated via LMI~\eqref{eq_min} as follows:
		\begin{equation} \nonumber
			K_1=\left(
			\begin{array}{cccccccc}
				0.224 &    0.224&  0 &    0 & 0 & 0  &  0&    0\\
				0.223  &  0.223 &   0&   0&  0  & 0 & 0&    0 \\
				0  & 0  &  0.225  &  0.225  & 0 & 0 & 0&    0\\
				0  & 0  &  0.224  &  0.224  & 0 & 0 & 0&    0\\
				0  &  0  &  0  &  0  &  0  &  0 &   0 &   0 \\
				0  &  0  &  0  &  0  &  0  &  0 &   0 &   0 \\
				0  &  0  &  0  &  0  &  0  &  0 &   0.228 &   0.228 \\ 
				0  &  0  &  0  &  0  &  0  &  0 &   0.227 &   0.227 \\
			\end{array} \right).
		\end{equation}
		\begin{equation} \nonumber
			K_2=\left(
			\begin{array}{cccccccc}
				0.226 &    0.226&  0 &    0 & 0 & 0  &  0&    0\\
				0.225  &  0.225 &   0&   0&  0  & 0 & 0&    0 \\
				0  & 0  &  0.226  &  0.226  & 0 & 0 & 0&    0\\
				0  & 0  &  0.225  &  0.225  & 0 & 0 & 0&    0\\
				0  &  0  &  0  &  0  &  0.230  &  0.230 &   0 &   0 \\
				0  &  0  &  0  &  0  &  0.229  &  0.229 &   0 &   0 \\
				0  &  0  &  0  &  0  &  0  &  0 &   0 &   0 \\ 
				0  &  0  &  0  &  0  &  0  &  0 &   0 &   0 \\
			\end{array} \right).
		\end{equation}
		\begin{equation} \nonumber
			K_3=\left(
			\begin{array}{cccccccc}
				0 &    0&  0 &    0 & 0 & 0  &  0&    0\\
				0  &  0 &   0&   0&  0  & 0 & 0&    0 \\
				0  & 0  &  0.226  &  0.226  & 0 & 0 & 0&    0\\
				0  & 0  &  0.225  &  0.225  & 0 & 0 & 0&    0\\
				0  &  0  &  0  &  0  &  0.222  &  0.222 &   0 &   0 \\
				0  &  0  &  0  &  0  &  0.221  &  0.221 &   0 &   0 \\
				0  &  0  &  0  &  0  &  0  &  0 &   0.221 &   0.221 \\ 
				0  &  0  &  0  &  0  &  0  &  0 &   0.220 &   0.220 \\
			\end{array} \right).
		\end{equation}
		\begin{equation} \nonumber
			K_4=\left(
			\begin{array}{cccccccc}
				0.230 &    0.230 &  0 &    0 & 0 & 0  &  0&    0\\
				0.230  &  0.230 &   0&   0&  0  & 0 & 0&    0 \\
				0  & 0  &  0  &  0  & 0 & 0 & 0&    0\\
				0  & 0  &  0  &  0  & 0 & 0 & 0&    0\\
				0  &  0  &  0  &  0  &  0.228  &  0.228 &   0 &   0 \\
				0  &  0  &  0  &  0  &  0.227  &  0.227 &   0 &   0 \\
				0  &  0  &  0  &  0  &  0  &  0 &   0.228 &   0.228 \\ 
				0  &  0  &  0  &  0  &  0  &  0 &   0.227 &   0.227 \\
			\end{array} \right).
		\end{equation}
		The spectral radius of $\widehat{A}$ is $0.975$ for this case.
		The estimated velocities and positions at the remaining $4$ CAVs are represented in Figs.~\ref{fig_redund_remov1} and~\ref{fig_redund_remov2}, respectively. The comparison with the benchmark centralized Kalman filter is also provided in Figs.~\ref{fig_redund_remov_comp} and~\ref{fig_redund_remov_comp2}. 
		\begin{figure} 
			\centering
			\includegraphics[width=4.7in]{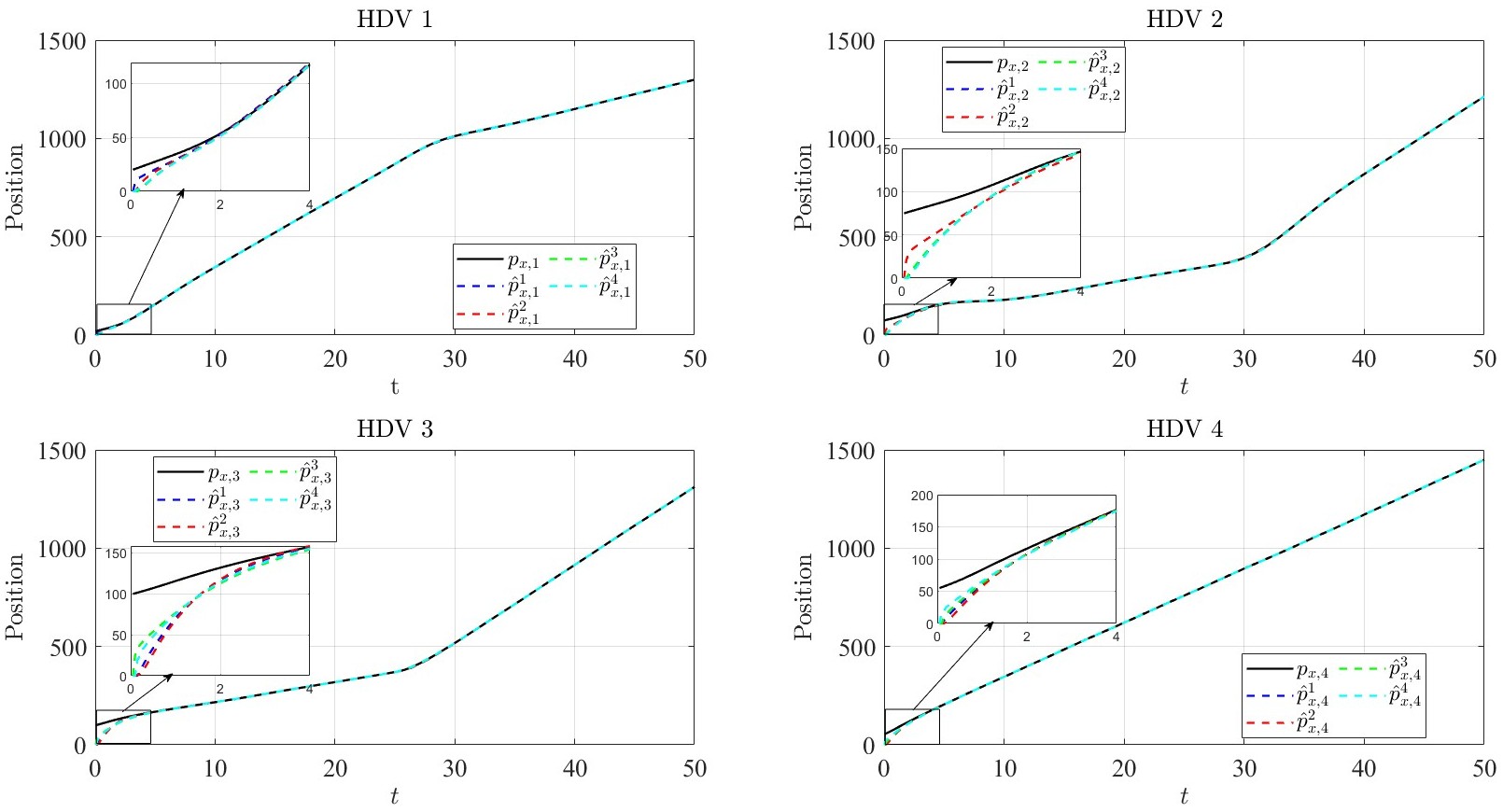}
			\caption{The position of the $i$-th HDV, $p_{x,i}$, $i=1,\dots,4$ and the estimated one, $\hat{p}^j_{x,i}$ by the $j$-th CAV $j=1,\dots,5$ using the distributed observer in Algorithm~\ref{alg_ac} after removing one corrrupted node and two corrupted link from the network of CAVs. 
			} \label{fig_redund_remov1}
		\end{figure}
		\begin{figure} 
			\centering
			\includegraphics[width=2.7in]{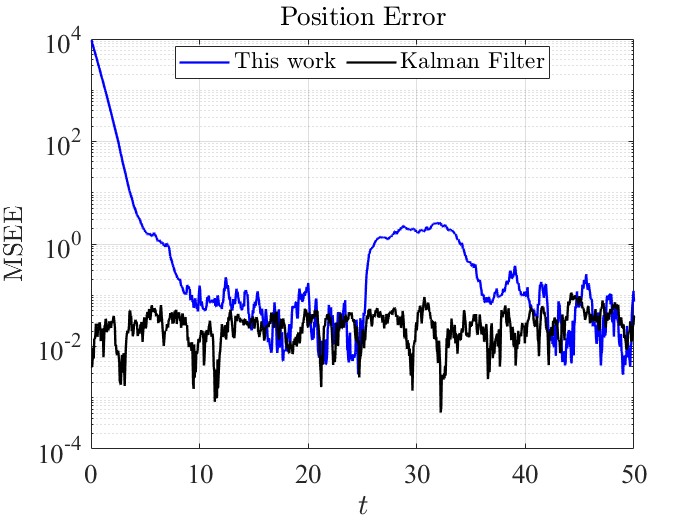}
			\caption{This figure compares the mean square estimation error (MSEE) of position estimates at all CAVs for the proposed distributed estimator and benchmark centralized Kalman filter. The HDV positions over time are presented in Fig.~\ref{fig_redund_remov1} for the resilient case.  
			} \label{fig_redund_remov_comp}
		\end{figure}
		\begin{figure} 
			\centering
			\includegraphics[width=4.7in]{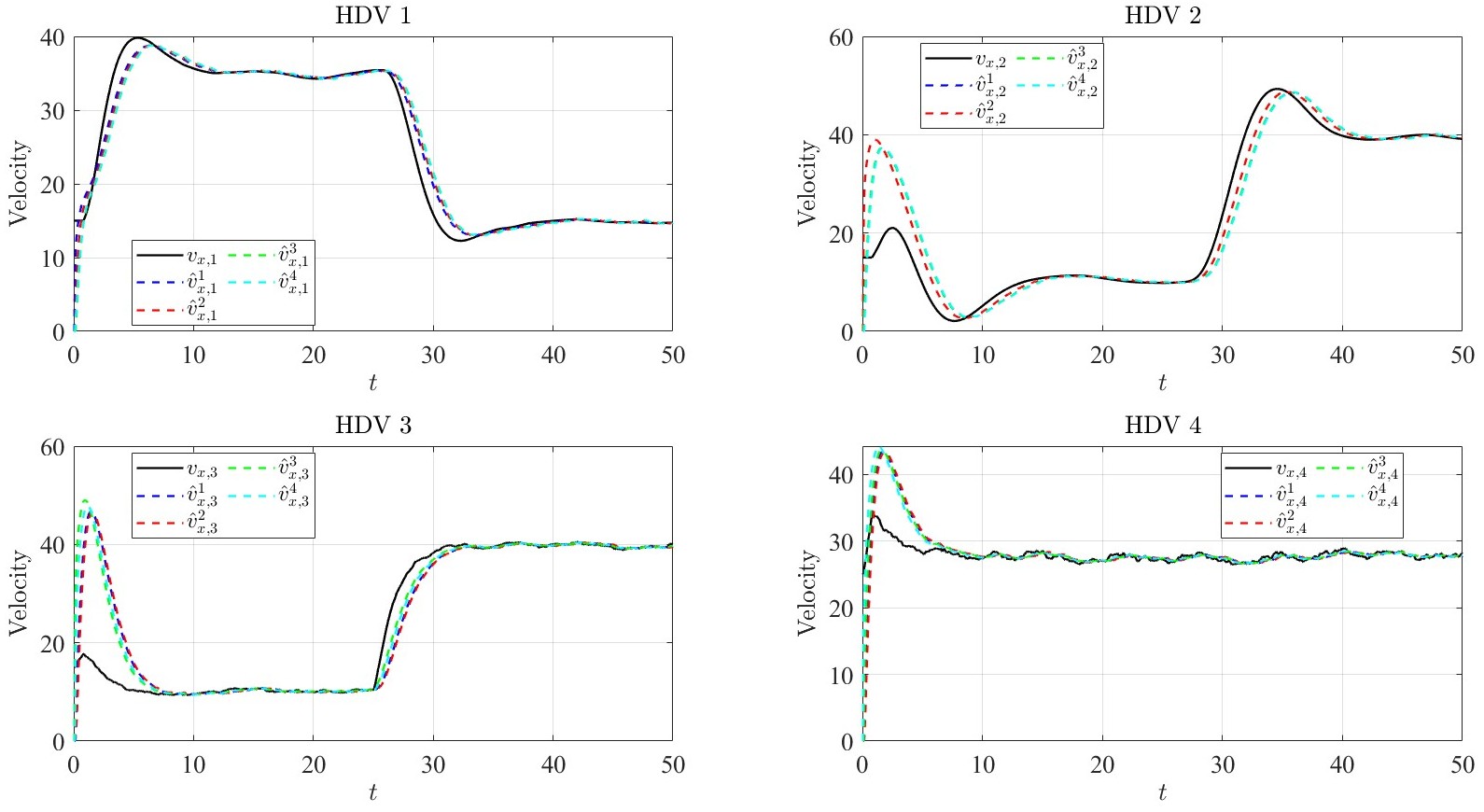}
			\caption{The velocity of the $i$-th HDV, $v_{x,i}$, $i=1,\dots,4$ and the estimated one, $\hat{v}^j_{x,i}$ by the $j$-th CAV $j=1,\dots,5$ using the distributed observer in Algorithm~\ref{alg_ac} after removing one corrrupted node and two corrupted link from the network of CAVs. 
			} \label{fig_redund_remov2}
		\end{figure}
		\begin{figure} 
			\centering
			\includegraphics[width=2.7in]{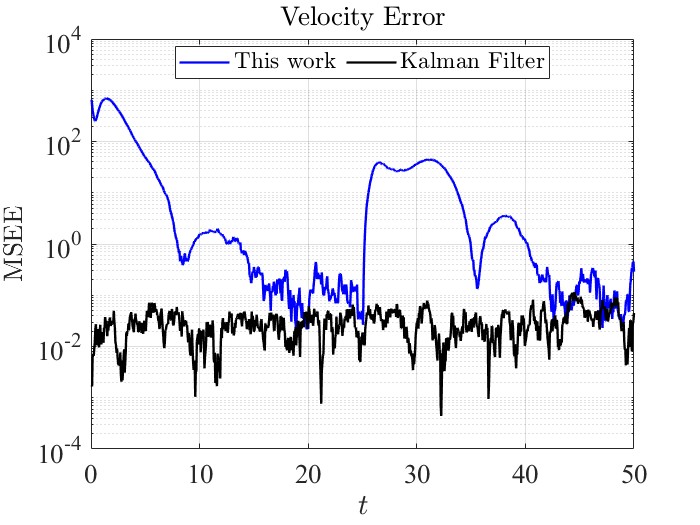}
			\caption{This figure compares the mean square estimation error (MSEE) of velocity estimates at all CAVs for the proposed distributed estimator and benchmark centralized Kalman filter. The HDV velocities over time are presented in Fig.~\ref{fig_redund_remov2} for the resilient case. 
			} \label{fig_redund_remov_comp2}
		\end{figure}

	\section{Conclusion} \label{sec_con}
	This paper derives a distributed observable state-space model for mixed traffic analysis and network of connected vehicles. The proposed formulation integrates vehicles' dynamic states, sensing capabilities, and their communication networks into one compatible framework, facilitating observability analysis for mixed traffic ITS. By establishing the conditions for distributed observability, strong network connectivity coupled with a block-diagonal observer gain ensures that each HDV's dynamics is observable to every other CAV through network-based data-sharing. Some of these CAVs may have no direct sensor measurement from the HDVs. We designed a distributed observer via locally sharing estimates and observations among neighboring CAVs.
	Moreover, we address the challenge of faulty sensors and unreliable observation data by introducing the concept of redundant distributed observability. The $q$-node/link-connected network design enhances system resilience, allowing for the isolation of a certain number of faulty sensors or corrupted communication links without compromising the distributed observability of the mixed traffic ITS. In this regard, our results advance the topology design of vehicle platoons and other types of intelligent transportation networks in terms of resiliency to faults, failures, or attacks. In other words, the strategy can be used to design network topologies resilient to the isolation of faulty sensors, the removal of failed vehicles, or the failure of a certain number of communication links without losing observability. 
	
	The proposed distributed estimator requires synchronous update at the CAVs with no time-delay, while every CAV needs to know the structure of the network topology in its neighborhood. In case of communication delays over the network of CAVs, the dynamics \eqref{eq_p}-\eqref{eq_m} needs to be updated by considering augmented consensus protocols as in \cite{6571230,DOOSTMOHAMMADIAN2025106260}. The cases of asynchronous networking and delay-tolerant design are our directions of future research. This paper considers linear system model and extension to nonlinear case is left for future research.
	Exploring the integration of distributed observers with advanced local control strategies, such as cooperative adaptive cruise control and automated lane-keeping systems, is another direction of future research.

	\section*{Declarations}
	This work has been supported by the Center for International Scientific Studies \& Collaborations (CISSC), Ministry of Science, Research and Technology
	of Iran.
	
	\bibliographystyle{spmpsci} 
	\bibliography{bibliography}
	
\end{document}